\newtheorem{theorem}{\underline{Theorem}}%[section]
\newtheorem{lemma}{\underline{Lemma}}%[section]
\newcommand{\q}{\mathbf q}
\newcommand{\Q}{\mathbf Q}
\newcommand{\A}{\mathbf A}
\newcommand{\w}{\mathbf w}
\newcommand{\pow}{\mathbf P}
\newcommand{\cc}{\mathbf c}
\newcommand{\La}{\bm{\lambda}}
\newcommand{\Mu}{\bm{\mu}}
\newcommand{\Be}{\bm{\beta}}
\newcommand{\Nu}{\bm{\nu}}
\newcommand{\N}{\mathcal{N}}
\newcommand{\K}{\mathcal{K}}
\begin{document}

\title{{Common Throughput Maximization in UAV-Enabled OFDMA Systems with Delay Consideration}}
\author{\IEEEauthorblockN{Qingqing Wu,  \emph{Member, IEEE},  and Rui Zhang, \emph{Fellow, IEEE}
\thanks{ The authors are with the Department of Electrical and Computer Engineering, National University of Singapore, email:\{elewuqq, elezhang\}@nus.edu.sg. Part of this work has been presented in IEEE APCC 2017 as an invited paper \cite{wu2017ofdma}.  }} }

\maketitle
%\begin{abstract}
%\end{abstract}
%

\begin{abstract}
The use of unmanned aerial vehicles (UAVs) as communication platforms is of great practical significance in future wireless networks, especially for on-demand deployment in temporary events and emergency situations. Although prior works have shown the performance improvement by exploiting the UAV's mobility, they mainly focus on delay-tolerant applications.
%% previous works have shown that significant through gain can be achieved by exploiting the high UAV mobilty for providing the delay-insensitive applications.
% However,
As delay requirements fundamentally limit the UAV's mobility, it remains unknown whether the UAV is able to provide any performance gain in delay-constrained communication scenarios.  Motivated by this, we study in this paper a UAV-enabled orthogonal frequency division multiple access (OFDMA) network where a UAV is dispatched as a mobile base station (BS) to serve a group of users on the ground. We consider a minimum-rate ratio (MRR) for each user, defined as the minimum instantaneous rate required over the average achievable throughput, to flexibly adjust the percentage of its delay-constrained data traffic.
%By taking into account both the delay-constrained and delay-tolerant traffics for each user,
Under a given set of constraints on the users' MRRs, we aim to maximize the minimum average throughput of all users by jointly optimizing the UAV trajectory and OFDMA resource allocation. First, we show that the max-min throughput in general decreases as the users' MRR constraints become more stringent, which reveals a fundamental throughput-delay tradeoff in UAV-enabled communications.  Next,  we propose an iterative parameter-assisted block coordinate descent method to optimize the UAV trajectory and OFDMA resource allocation alternately,  by applying the successive convex optimization and the Lagrange duality, respectively. Furthermore, an efficient and systematic UAV trajectory initialization scheme is proposed based on a simple circular trajectory. Finally, simulation results are provided to verify our theoretical findings and demonstrate the effectiveness of our proposed designs.
 %which cannot meet the instantaneous rate requirements of future wireless networks.
\end{abstract}

\begin{IEEEkeywords}
UAV communications, delay constraint, throughput maximization, trajectory design, OFDMA, resource allocation.%, UAV mobility control.
\end{IEEEkeywords}

%\newpage
\section{Introduction}
As predicted in \cite{drone_num},  the number of unmanned aerial vehicles (UAVs), also known as drones, will continue to surge over the next several years, with the global annual unit shipment increasing more than tenfold from 6.4 million in 2015 to 67.7 million by 2021. Such a dramatic increase is due to the steadily decreasing cost of UAVs and their fast-growing  demand in applications such as surveillance and monitoring, aerial camera and radar, cargo delivery, communication platforms, etc. In June 2017,  the International Olympic Committee (IOC) and Intel announced to use UAVs to enhance the future Olympic Games experience, such as the drone light show  \cite{Intel_drone}.
 %where the UAV is envisioned as the first groundbreaking technology to advance the future Olympic Games experience from many aspects, such as the drone light show
%a long-term technology partnership   where the UAV is envisioned as the first groundbreaking technology to advance the future Olympic Games experience from many aspects, such as the drone light show
Just in the same month, the ``Safe DRONE Act of 2017'' was proposed to the United States Congress for accelerating the development of the UAV technology, which requested 14 million dollars for funding a host of research projects on UAVs \cite{ACT2017_drone}.  In fact, many prominent companies such as Google, Qualcomm, Amazon, and Nokia,  have already launched their respective programs  to advance the UAV research and conduct UAV field tests. In particular, the recent trial results released by Qualcomm have shown that the current fourth-generation (4G) cellular network can provide reliable communications for UAVs at an altitude up to 400 feet  \cite{qualcom_UAVreport}, which paves the way for realizing cellular-enabled UAV communications in future.
%are capable of demonstrating seamless handovers between different terrestrial base stations (BSs) during flights with zero link failures.
%This indicates that UAVs can be regarded as normal ``aerial users'' as those on the ground, which is ofter termed as  ``cellular-enabled UAVs''.
Besides being ``drone clients'' of wireless networks, UAVs can also be employed as various aerial communication platforms, to help improve the performance of existing wireless communication systems such as cellular networks \cite{zeng2016wireless}.

Compared to the traditional terrestrial wireless communications, UAV-enabled communications mainly have the following three advantages. First, they are significantly   less affected by channel impairments such as shadowing and fading, and in general possess more reliable air-to-ground channels due to the much higher possibility of having line-of-sight (LoS) links with ground users. Second, UAVs can be deployed more flexibly and moved more freely in the three-dimensional (3D) space. Third, the high mobility of UAVs can be fully controlled to enhance desired communication links and/or avoid undesired interference via proper trajectory design. These features
bring both opportunities and challenges in designing UAV-enabled wireless communications,
%thus enable a new degree of  freedom, i.e.,  UAV trajectory, in designing wireless communication systems,
which were not explored before in conventional terrestrial systems with fixed ground base stations (BSs) \cite{zeng2016wireless}.  As UAVs are suitable to serve as  airborne communication hubs such as aerial BSs and/or relays, they are especially useful for the practical scenarios that require  on-demand deployment in temporary events or emergency situations (such as natural disaster), when the ground infrastructures are insufficient or even unavailable.
%and thus  have been considered as an important component of future wireless networks.  %such as on-demand deployment during a temporary event or disaster.
However, there are also open  challenges that need to be tackled before realizing the full potential of UAV-enabled communications,  such as UAV deployment and trajectory design, communications resource allocation and multiple access, etc.
%the spectrum efficiency and coverage  performance is expected to be significantly improved by employing UAVs.
%that before realizing the full benefit brought by UAVs,
%The ABSOLUTE project in Europe aims to provide a low latency and large coverage networks using aerial base stations for capacity enhancements and
%public safety during temporary events [4].  some works have studied various

 Recently, UAV deployment problems have been extensively studied in the literature \cite{al2014optimal,lyu2016placement,bor2016efficient,chetlur2017downlink2,fotouhi2016dynamic,zhang2017spectrum}. The main design objective  is to optimize the UAVs' altitude, horizontal positions, and/or spatial  density to achieve the maximum communication coverage in a given area. However, the high mobility or trajectory design of UAVs is not considered in these works.  An energy-efficient relaying scheme is proposed in \cite{li2016energy} where multiple UAVs cooperatively relay data packets from ground sensors to a remote BS based on time division multiple access (TDMA). Although  the UAV mobility is considered in  \cite{li2016energy}, the UAVs' trajectories are assumed to be pre-determined and not optimized, which simplifies the design to a UAV-packet matching problem.   In \cite{mozaffari2016unmanned}, both the static and mobile UAV-enabled wireless networks are studied which are underlaid with  a device-to-device (D2D) communication network.
Yet, the mobile UAV is only allowed to communicate  at a set of  stop points. As a result, the UAV trajectory is highly restricted and does not fully exploit the UAV's  high mobility for performance optimization.
%while regardless of the information transmission, leading to an underestimated performance gain of the trajectory design.
%% which greatly understimates the performance gain of the trajectory design.
%  the UAV is only assumed to transmit signals at some pre-defined locations and the trajectory is thereby heavily restricted while regardless of the transmission, leading to an underestimated performance gain of trajectory design and severe user access del instants ay.
 %Although the ground vehicle mobility has been studied before, the freedom of the trajectory design or path planning for ground vehicles is in general rather limited due to the restrictions of streets and ground buildings. As a result, the formulated problems are often reduced to finding stopping points or connectivity problems, which are eventually solved by using the graph theory.
%Compared to ground vehicles that are generally restricted by streets and ground buildings,  UAVs  can maneuver more freely in the 3D space.
As such, a joint UAV trajectory and adaptive communication design is more promising.  %period is able to significantly improve the system performance.
Motivated by this, a general  trajectory and communication joint  optimization framework is proposed in \cite{zeng2016throughput} for a UAV-enabled mobile relaying system, which is also extended to the energy efficiency maximization in a point-to-point UAV-ground communication system  \cite{zeng2016energy}.
The UAV trajectory designs in  \cite{zeng2016throughput} and \cite{zeng2016energy} can be considered as a generalization of the UAV deployment problem, subject to practical constraints on the UAV's mobility, such as its initial/final locations, maximum speed and acceleration, etc. Thus, it is intuitive that a mobile UAV with optimized trajectory in general can achieve  higher throughput than a static UAV, even with optimal deployment/placement.
%Fundamentally, such a UAV trajectory design can also be regarded as a global optimization of a series of UAV deployment problems across different time instants, where the UAV' locations between two consecutive instants are constrained by the UAV mobility constraints, such as speed and acceleration constraints.
%and \cite{zeng2016energy}  have studied the trajectory optimization  for a mobile relaying system and point-to-point energy-efficient system, respectively. Yet, both two works are limited to the case of a single UAV and a single ground user.
%Considering that  \cite{zeng2016throughput} and  \cite{zeng2016energy} are only limited to the single user scenario,
For UAV-enabled multiuser communication networks, a novel cyclical multiple access scheme is proposed in \cite{lyu2016cyclical} where the UAV periodically serves each of the ground users along its cyclical trajectory via TDMA.
In \cite{JR:wu2017joint}, a joint user scheduling, power control, and trajectory optimization problem is investigated for a multi-UAV enabled multiuser  system. It is shown that with joint trajectory design and power control, the UAVs can cooperatively serve the ground users in a periodic manner with strong LoS links and yet   avoiding severe interference. However, this is achieved by scheduling each user to communicate only when its associated UAV is
% This results in a significant throughput gain. Furthermore, for the special case of a single UAV in \cite{JR:wu2017joint}, it turns out that each user is exclusively scheduled with a time interval during which, the UAV is already
sufficiently close to it in each UAV flight period, which implies that the user has to wait for the next UAV flight period to communicate again.
%, and out of its own time interval, each user has to wait for the next UAV flight period for communication. In other words, during a  whole UAV flight period, each user is only allowed to communicate with the UAV for one time.
% where a user is employed to serve multiple users in practice.
%In other words,  users are only scheduled to access the spectrum within a time interval and then need to wait for the next UAV flight period.
%This, however, causes the user access delay at the users' side.
%the UAV communicates with ground users when it flies sufficiently close to each of them in a periodic (cyclical) time-division manner.
 % it is quite different with the UAV mobility since
%Terrestrial communication infrastructures can be
%annihilated or partially damaged during disaster scenarios or temporary events [1].
 %Motivated by this,
 %.@Bkrunner talks Intel #drone technology, and the possibilities they open up for new experiences at the #Olympics. http://intel.ly/2sQnLki
 %According to Intel, the new partnership will “transform the Olympic Games and the Olympic experience.”
%  UAVs are “quickly emerging as an important computing platform for the future” and “offer an incredible opportunity for innovation across a multitude of industries
%hus, the blog says, Intel is looking to be “at the forefront of this opportunity to increasingly integrate the computing, communications, sensor and cloud technology required to make drones smarter and more connected.”  %miscellaneous devices
As a result, the throughput gain brought by the UAV's mobility does not come for free but in fact at the cost of user communication delay.

Future wireless networks are expected to provide different quality-of-service (QoS) guarantees for a wide range of applications with diversified requirements \cite{wu2016overview,zhang2016fundamental}. In fact, the delay requirements of wireless multimedia services may vary dramatically in a large scale from milliseconds such as for video conferencing and online gaming, to several seconds  for file downloading/sharing and data backup. For example, when a user is downloading files which are delay-tolerant in general,  some other users in the same area may be watching high-definition (HD) movies on YouTube that require minimum rates at any time.  In light of this, wireless networks with such mixed services should be optimized to not only maximize the system total throughput but also meet the heterogeneous delay requirements of different applications. Although the UAV trajectory design has been shown  to significantly enhance the throughput of various  wireless communication systems such as  for mobile relaying channel \cite{zeng2016throughput,li2016energy}, multiple access channel (MAC) and broadcast channel (BC) \cite{wu2017joint,JR:wu2017_capacity,lyu2016cyclical}, interference channel (IFC)\cite{JR:wu2017joint,mozaffari2016unmanned}, and wiretap channel \cite{guangchi2016_UAV,mozaffari2016unmanned},  all these works consider only delay-tolerant applications where the users' instantaneous rates in general cannot be guaranteed. Therefore, it remains unknown whether mobile UAVs are still able to provide throughput gains over static UAVs/ground BSs, when the users'  delay or minimum-rate requirements are considered.

%IoT consists of heterogeneous devices with different constraints, requirements, and capabilities which makes it impossible to apply one strategy to all scenarios. For example, the required energy for the operation of a low-power sensor node is not comparable to the amount needed for powering a mobile phone. Moreover, different IoT devices may be in different conditions in terms of the volume of data to be transmitted, delay constraints, QoS requirements, etc. One device may experience long delay (e.g., several weeks) between two transmission cycles while the other has to transmit once every second. Also, some devices may have critical data (e.g., urgent medical information) which need to be prioritized for timely transmission. These dissimilarities between different IoT components have to be taken into account in designing resource allocation schemes.

    \begin{figure}[!t]
\centering
\includegraphics[width=0.5\textwidth]{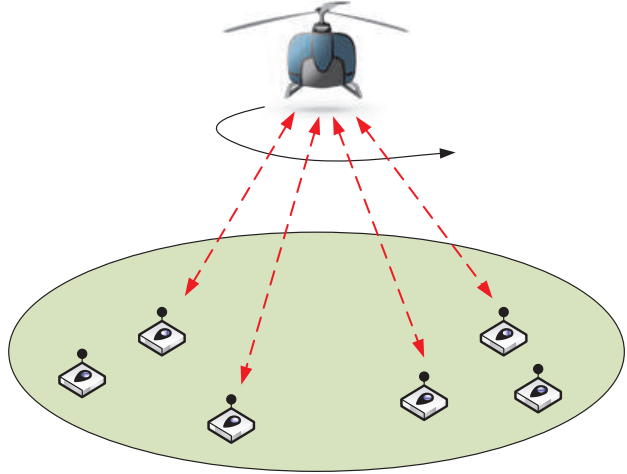}
\caption{ A UAV-enabled wireless network.} \label{UAV}\vspace{-0.4cm}
\end{figure}

%video conferencing and streaming, online gaming, voice, web-browsing, fax,
%data center backups, software updates, download and play applications, e-mail and file downloading applications.

Motivated by such an open question, we study in this paper a UAV-enabled orthogonal frequency division multiple access (OFDMA) system, where a UAV is dispatched as a mobile BS to serve a group of users on the ground during  a given finite period, as shown in Fig. 1. Besides having been standardized as the downlink multiple access scheme in the current 4G networks,  OFDMA is also deemed as a promising candidate  for the forthcoming fifth-generation (5G) wireless networks \cite{andrews2014will,qing1,wang2017joint}.
%has been shown to improve the system spectral efficiency as well as meeting the quality-of-services (QoS) of users.
 This is essentially attributed to its various advantages, e.g., flexible bandwidth and power allocation over users, which fits particularly well in our considered  UAV-enabled network with heterogeneous user delay requirements. {Without loss of generality, we consider two types of data traffic for each user: delay-constrained traffic where a minimum rate should be supported at any time during the UAV service period versus delay-tolerant traffic where no minimum rate is required and the transmission rate can be elastically allocated over the period. To this end, we introduce a minimum-rate ratio (MRR) for each user which is defined as the minimum  instantaneous rate required over the achievable average throughput of the whole period. As such,  the MRR can be flexibly adjusted by each user to specify the percentage of the delay-constrained traffic required versus that of the delay-tolerant traffic, depending on real-time applications.}  Our goal is to maximize the minimum average throughput of all  ground users while meeting a given set of constraints on the users' MRRs,  by jointly optimizing the UAV trajectory and OFDMA resource allocation.
  %Specifically, we assume that each user has a fixed ratio rate requirement at any time instant of the UAV's flight period, where the ratio is defined by the instantaneous achievable rate to the average achievable rate, depending on its ongoing application in general. %For example, a ground user that has a larger fixed rate ratio implies %that a higher instantaneous rate is required such as for the video conferencing and streaming, while the one with a smaller fixed rate ratio only requires a lower instantaneous rate such as for the file downloading.
  Compared to prior works \cite{li2016energy,zeng2016throughput,zeng2016energy,jeong2016mobile,lyu2016cyclical,JR:wu2017joint,wu2017joint,mozaffari2016unmanned,Mozaffari2016}, such a joint UAV trajectory and resource allocation design is more general and practically useful  since the user communication delay requirements are  taken into account.
 This thus leads to a more fair throughput comparison with the terrestrial BSs or static UAVs and also helps to reveal a better understanding of the fundamental throughput  gain achievable  by exploiting the UAV's mobility control subject to delay constraints. Intuitively, when the UAV flies towards some users to capture better channels with them, it  gets farther away spontaneously from other users that are not in its heading direction and thus experiences degraded channels. As a result, more bandwidth and transmit power need to be allocated to those users so that their minimum rates can be achieved. This in turn would limit the potential rate increase  of the users in the UAV's heading direction.  Thus, there exists a new and non-trivial tradeoff in the UAV trajectory design for throughput maximization when minimum-rate or delay constraints are considered.

 %As can be seen while being interesting, the considered problem is challenging to solve due to such a coupling of  resource allocation and UAV trajectory design.
% which, however, may not be cost-effective regarding to the average user rate
 %While being attractive, the proposed problem is challenging to solve.

 The main contributions of this paper are summarized as follows. First, we formulate a joint UAV trajectory and OFDMA resource allocation optimization problem to maximize the minimum throughput  of ground users while guaranteeing their specified MRR constraints. Next, we show that the system max-min throughput is non-increasing with respect to the users' MRRs, which implies that the throughput gain of mobile UAVs over static UAVs reduces as the delay constraints become more stringent.
 %achieved by the UAV mobility fundamentally comes from providing delay-tolerant services.
 Although the formulated problem is non-convex and challenging to solve, we propose an efficient iterative block coordinate descent algorithm to solve the bandwidth and power allocation problem and the UAV trajectory optimization problem alternately, which is guaranteed to converge. Specifically, in each iteration, the bandwidth and power allocation problem is solved optimally by applying the Lagrange duality method with given fixed UAV trajectory.
  %Due to the lack of strict convexity of the Lagrangian function with the optimization variables, the optimal primal solution cannot be directly obtained. However,  we observe that the optimal relation of the bandwidth and power allocation can be obtained with the optimal dual variables, based on which, we finally construct the optimal primal solution.
 While for the UAV trajectory optimization problem with fixed OFDMA resource allocation, the successive convex optimization technique is applied to tackle its non-convexity. However, due to the MRR constraints, it is shown that the conventional block coordinate descent method which directly iterates between the OFDMA resource allocation and UAV trajectory design will very likely get stuck at the initial point, which leads to an ineffective update of the UAV trajectory. To overcome this issue, we propose a new {\it parameter-assisted} block coordinate descent method where each parameter is a temporary MRR set to be larger than the target MRR for a corresponding user.
 Then at each iteration, we gradually decrease the temporary MRRs before solving the UAV trajectory optimization problem, until they reach the target MRRs for all users.  It is shown that this new method can effectively update the UAV trajectory in each iteration, thus resolving the issue of conventional block coordinate descent method.  Furthermore, we propose a systematic and low-complexity UAV trajectory initialization scheme based on the simple circular trajectory. Finally, numerical results are provided to verify  the fundamental tradeoff between the system max-min throughput and the users' delay/MRR constraints  and demonstrate the effectiveness of our proposed designs.

The rest of this paper is organized as follows. Section II introduces the system model and the problem formulation for a UAV-enabled OFDMA network.
In Section III, we propose an efficient iterative algorithm as well as a general UAV trajectory initialization scheme for the considered problem.
Section VI presents the numerical results to demonstrate the performance of the proposed designs. Finally, we conclude the paper  in Section VI.

\emph{Notations:} In this paper, scalars are denoted by italic letters, while vectors and matrices are respectively denoted by bold-face lower-case and upper-case letters. $\mathbb{R}^{M\times 1}$ denotes the space of $M$-dimensional real-valued vector. For a vector $\mathbf{a}$, $\|\mathbf{a}\|$ represents its Euclidean norm, $\mathbf{a}^T$ denotes its transpose, and  $\mathbf{a}\succeq 0$ indicates that $\mathbf{a}$ is element-wise larger than or equal to 0. For a time-dependent function $\mathbf{x}(t)$,  $\dot{\mathbf{x}}(t)$ denotes its derivative with respect to time $t$. For a set $\mathcal{K}$, $|\mathcal{K}|$ denotes its cardinality.

\section{System Model and Problem Formulation}

\subsection{System Model}
As shown in Fig. 1, we consider a UAV-enabled OFDMA system where the UAV is employed as an aerial BS to serve a group of $K$ users on the ground.  The user set is denoted by $\mathcal{K}$ with $|\mathcal{K}|=K$.
  In practice, the users that need to be served by the UAV can be either the terrestrial BSs that have no ground backhaul links or the ground mobile terminals in a geographical area that is poorly or not even covered by existing terrestrial BSs.
  %could practically correspond to the case when an event or competition is held in a geographical area that is poorly or not even covered by existing terrestrial BSs or the case when the UAV serves the terrestrial BSs that have no ground backhaul links.
%In this paper, we focus on  the downlink communication scenario from the UAV to ground users, while the results obtained can be readily extended to the uplink transmission from ground users to the UAV.
 At any time during the UAV flight period, denoted by $T$,  the UAV can communicate with multiple ground users simultaneously by employing OFDMA, i.e., assigning each user a fraction of the total bandwidth/transmit power.
 In general,  from the perspective of throughput maximization, it is intuitive that larger $T$ is desirable since it will provide the UAV more time to fly closer to each ground user, leading to better air-to-ground links. The effect of  $T$  on the system performance will be investigated  in detail in Section IV. {Since we target for a centralized  design that is implemented off-line, the proposed algorithm can be performed at a central controller (e.g., installed on the UAV)  who is able to collect all the users' information such as their  locations and MRRs. Then the obtained solutions can be programmed into the control and communication units  of the UAV.}
 %Thus, the choice of $T$ does not  affect the user access delay, which is quite different from that in \cite{JR:wu2017joint}. However, when the UAV mobility indeed provides performance gain, the user rate will still be affected by $T$  as in \cite{JR:wu2017joint} since larger $T$  provides the UAV more time to fly close to each ground user and thus enables better air-to-ground links.

%Denote the UAV flight period by $T$.

%Without loss of generality, we consider a 3D Cartesian coordinate system where
 We assume that the horizontal coordinate of each ground user is known in advance and fixed at ${\mathbf{w}}_{k}=[x_k,y_k]^T \in \mathbb{R}^{2\times 1}$, $k\in \mathcal{K}$. The UAV is assumed to fly at a fixed altitude $H$ above ground and the time-varying horizontal coordinate of the UAV  at time instant $t$ is denoted by $\mathbf{q}(t)=[x(t), y(t)]^T\in \mathbb{R}^{2\times 1}$, with $0\leq t\leq T$.
To serve ground users periodically,  we assume that the UAV needs to return to its initial location by the end of each period $T$, i.e.,
%\begin{align}
$\q(0) = \q(T)$. %\label{eq01}
%\end{align}
%which implies that the UAV needs to return to its initial location by the end of each period $T$ such that users can be served periodically in the next period.
In addition, the UAV trajectory is also subject to the maximum speed constraints in practice, i.e,
%\begin{align}
$||\dot{\q}(t)|| \leq V_{\max},  0\leq t \leq T$, %\label{eq02}
%\end{align}
where $V_{\max}$ denotes the maximum UAV speed in meter/second (m/s).  However, the continuous variable $t$ essentially implies an infinite number of UAV speed constraints which are difficult to tackle in general. To facilitate our analysis and algorithm design, we apply the discrete linear state-space approximation technique, which results in a finite number of constraints. Specifically, we discretize the UAV flight period $T$ into $N$ equally-spaced time slots with step size $\delta_t$, i.e., $t=n\delta_t, n\in \N=\{1,...,N\}$. {Note that for the given maximum UAV speed $V_{\max}$ and altitude $H$, the number of time slots $N$ can be chosen sufficiently large such that the UAV's location change within each time slot $\delta_t$ can be assumed to be negligible, compared to the link distances from the UAV to all ground users. However,  a larger value of $N$ inevitably increases the complexity of the proposed design. Thus,  the number of time slots $N$ can be properly chosen in practice to balance between the accuracy and algorithm complexity. More discussions on the choice of $N$ as well as its impact on the system performance can be found in \cite{JR:wu2017joint}. }
Based on such a discretization, the UAV trajectory can be characterized by a sequence of UAV locations $\mathbf{q}[n]=[x[n], y[n]]^T$, $n=1,\cdots, N$. As a result, the above constraints can be equivalently modeled as
\begin{align}
\q[1] &= \q[N],\\
||\mathbf{q}[n+1]-\mathbf{q}[n]||^2 &\leq S_{\max}^2,  n=1,...,N-1,
\end{align}
where $S_{\max} \triangleq V_{\max}\delta_t$ is the maximum horizontal distance that the UAV can travel within one time slot. Furthermore, the distance  from the UAV to user $k$ in time slot $n$ is assumed to be a constant that  can be expressed as
 \begin{align}
 d_{k}[n] =  \sqrt{H^2 +||\q[n]-\w_k||^2}.
 \end{align}

The measurement results in \cite{lin2017sky,van2016lte,matolak2017air} have shown that the air-to-ground communication channels are mainly dominated by the LoS links even when the UAV is at a moderate altitude. For example, for the UAV at an altitude of $H= 120$ m, the LoS probability of air-to-ground links in rural environment exceeds 95\% for a horizontal ground distance of 4 kilometers (Km) \cite{lin2017sky}. Thus, we assume that the channel quality depends mainly on the UAV-user distance for simplicity. In addition, the Doppler effect induced by the UAV mobility  is assumed to be perfectly compensated at the receivers.  Following the free-space path loss model, the channel power gain from the UAV to user $k$ in time slot $n$ can thus be expressed as
   \begin{align}
h_{k}[n]& = \rho_0d^{-2}_{k}[n] =\frac{\rho_0}{ H^2 +||\q[n]-\w_k||^2},
 \end{align}
 where $\rho_0$ denotes the channel power gain at the reference distance $d_0=1$ m.
Denote the total available system bandwidth by $B$ in Hertz (Hz).  The fraction of bandwidth assigned to user $k$ in time slot $n$  is denoted by $\alpha_k[n]$.
In a practical OFDMA system, $\alpha_k[n]$  is in general a discrete value between $0$ and $1$, which increases linearly with the number of subcarriers assigned to user $k$ in time slot $n$.
It is known that when the number of subcarriers is sufficiently large, $\alpha_k[n]$ can be approximated to a continuous value between $0$ and $1$.  Thus, the bandwidth allocation constraint can be expressed as
 \begin{align}
 \sum_{k=1}^{K}\alpha_{k}[n]&\leq 1, \forall\, n,  \label{eq12} \\
0\leq\alpha_{k}[n]&\leq 1, \forall\, k, n.   \label{eq13}
 \end{align}
Denote the transmit power allocated to user $k$ in time slot $n$ by $p_{k}[n]\geq 0$. Then the total transmit power constraint of the UAV can be expressed as
\begin{align}
\sum_{k=1}^{K}p_{k}[n]\leq P_{\max}, \forall\,n,
\end{align}
where $P_{\max}$ is the maximum allowed transmit power of the UAV in each time slot.
Accordingly, the  instantaneous  achievable rate of user $k$ in time slot $n$, denoted by $r_{k}[n]$ in bits/second/Hz (bps/Hz), can be expressed as
\begin{align}
 r_k[n]& =\alpha_k [n] \log_2 \left( 1 +\frac{p_k[n]h_k[n]}{\alpha_k[n]BN_0} \right) \nonumber \\
 &=\alpha_k [n] \log_2 \left( 1 +\frac{p_k[n]\gamma_0}{\alpha_k[n](H^2+||\mathbf{q}[n]-\mathbf{w}_k||^2)} \right),
\end{align}
 where $\gamma_0 \triangleq \frac{\rho_0}{BN_0}$, with $N_0$ denoting the power spectral density of the additive white Gaussian noise (AWGN) at the receivers.
As a result, the average  achievable throughput of user $k$ over $N$ time slots  in bps/Hz, is given by
   \begin{align}
R_k&= \frac{1}{N}\sum_{n=1}^{N} r_k[n].
 \end{align}

 Motivated by the diversified user applications and heterogeneous delay requirements in the forthcoming 5G wireless networks,  we consider both delay-constrained and delay-tolerant services for users in the system. Specifically, a parameter $\theta_k$,  $\theta_k\in [0\,\,1]$,  is introduced to denote the MRR of user $k$, which means that at any time slot, $\theta_k$ fraction of its average throughput over $N$ slots is delay-constrained and the remaining $1-\theta_k$ fraction  is delay-tolerant.
 %the ratio of the delay-constrained rate of user $k$ is denoted by $\theta_k$ where $\theta_k\in [0\,\,1]$ and hence $1-\theta_k$ denotes the ratio of the delay-tolerant rate.
 In particular, $\theta_k=0$ and $\theta_k=1$  indicate that all the services of user $k$ are delay-tolerant and delay-constrained, respectively. As such, the MRR constraint of user $k$ in time slot $n$ can be expressed as
 \begin{align}
  r_k[n]\geq \theta_kR_k, \forall\, k, n,
\end{align}
which implies that in any of the $N$ time slots, at least $\theta_k$ fraction of the average throughput needs to be satisfied for each user $k$. Therefore, a system where some users' services  are all delay-constrained while those of the others are all delay-tolerant, i.e., $\theta_k\in\{0,1\}, \forall\, k$, is a special case of our general  setup.
\subsection{Problem Formulation}
Let $\mathbf{A}=\{\alpha_{k}[n],  k\in \K,n\in\N \}$, $\mathbf{P}=\{p_{k}[n],  k\in \K,n\in\N\}$, and $\mathbf{Q}=\{\mathbf{q}[n], n\in\N\}$.
By taking into account the MRR constraints of all users, we aim to maximize the minimum average throughput among them via jointly optimizing the bandwidth and power allocation  (i.e., $\A$ and $\pow$) as well as the UAV trajectory (i.e., $\Q$). Define $\eta \triangleq \min \limits_{k \in \mathcal{K}}~ R_k$. The optimization problem is formulated as
 \begin{subequations}\label{probm6}
 \begin{align}
&\max  \limits_{\eta,\mathbf{A},\mathbf{P}, \mathbf{Q}} ~~~~ ~\eta  \\ %\label{probm6}
  %\sum_{n=1}^{N}\alpha_{i}[n]  \log_2\left( 1 +\frac{\gamma_0}{H^2+||\mathbf{q}[n]-\mathbf{w}_i||^2} \right)     \nonumber  \\
&~~~\text{s.t.}  ~~~~ R_k \geq \eta, \forall\, k, \label{eq601} \\
&~~~~~~~~~ ~ r_k[n]\geq \theta_kR_k, \forall\, k, n, \label{eq602} \\
&~~~~~~~~~ ~  \sum_{k=1}^{K}p_{k}[n]\leq P_{\max}, \forall\, n,  \label{eq603} \\
&~~~~~~~~~ ~  p_{k}[n]\geq 0, \forall\, k, n,  \label{eq6033} \\
&~~~~~~~~~ ~  \sum_{k=1}^{K}\alpha_{k}[n]\leq 1, \forall\, n,  \label{eq604} \\
&~~~~~~~~~ ~ 0\leq\alpha_{k}[n]\leq 1, \forall\, k, n,   \label{eq605} \\
&~~~~~~~~~ ~||\mathbf{q}[n+1]-\mathbf{q}[n]||^2 \leq S_{\max}^2,  n=1,...,N-1,  \label{eq606}\\
&~~~~~~~~~ ~ \mathbf{q}[1]=\mathbf{q}[N]. \label{eq607}
 \end{align}
 \end{subequations}
 Note that the challenges of solving problem (\ref{probm6}) lie in the following three aspects. First, $R_k$ and $r_k[n]$ in constraints (\ref{eq601}) and  (\ref{eq602}) are not jointly concave with respect to the optimization variables $\A$, $\pow$, and $\Q$. Second, for fixed UAV trajectory $\Q$, although $R_k$  and $r_k[n]$ are jointly concave with respect to  $\A$ and $\pow$,  (\ref{eq602}) is non-convex due to the presence of $R_k$ in its left-hand-side (LHS).
Third, for fixed bandwidth and power allocation  $\A$ and $\pow$, $R_k$ and $r_k[n]$ are neither convex nor concave with respect to $\Q$. Consequently, problem (\ref{probm6}) is a non-convex optimization problem and in general, there is no standard method for solving such a problem efficiently. To tackle the above challenges, we first transform  problem (\ref{probm6})  into a more tractable form as follows,
  \begin{subequations}\label{probm66}
 \begin{align}
&\max  \limits_{\eta, \mathbf{A},\mathbf{P}, \mathbf{Q}} ~~~~ ~\eta  \\ %\label{probm6}
  %\sum_{n=1}^{N}\alpha_{i}[n]  \log_2\left( 1 +\frac{\gamma_0}{H^2+||\mathbf{q}[n]-\mathbf{w}_i||^2} \right)     \nonumber  \\
&~~~\text{s.t.}  ~~~~ r_k[n]\geq \theta_k\eta, \forall\, k, n, \label{eq012} \\
%R_k \geq \eta, \forall\, k, \label{eq012} \\
%&~~~~~~~~~ ~
&~~~~~~~~~ ~ \eqref{eq601}, \eqref{eq603}, \eqref{eq6033}, \eqref{eq604}, \eqref{eq605}, \eqref{eq606}, \eqref{eq607}.
%&~~~~~~~~~ ~  \sum_{k=1}^{K}p_{k}[n]\leq P_{\max}, \forall\, n,  \label{eq12} \\
%&~~~~~~~~~ ~  p_{k}[n]\geq 0, \forall\, k, n,  \label{eq6033} \\
%&~~~~~~~~~ ~  \sum_{k=1}^{K}\alpha_{k}[n]\leq 1, \forall\, n,  \label{eq12} \\
%&~~~~~~~~~ ~ 0\leq\alpha_{k}[n]\leq 1, \forall\, k, n,   \label{eq13} \\
%&~~~~~~~~~ ~||\mathbf{q}[n+1]-\mathbf{q}[n]||^2 \leq S_{\max}^2,  n=1,...,N-1,  \label{eq14}\\
%&~~~~~~~~~ ~ \mathbf{q}[1]=\mathbf{q}[N]. \label{eq15}
 \end{align}
 \end{subequations}
 %It can be shown that any locally optimal solution (including the globally optimal solution) of problem (\ref{probm66}) is also a locally optimal solution of problem  (\ref{probm6}).
 {Comparing (\ref{eq012}) with (\ref{eq602}), it follows that the feasible set of problem  (\ref{probm6}) is a subset of that of problem  (\ref{probm66}) in general. However, the equivalence of problems  (\ref{probm6}) and  (\ref{probm66}) holds if all users achieve the equal average throughput in the optimal solution to problem  (\ref{probm6}). This can be easily verified since otherwise the objective value of  (\ref{probm6})  can be further improved by allocating more transmit power and/or bandwidth  to the user with a lower average throughput without violating the total transmit power and bandwidth allocation constraints \eqref{eq603} and \eqref{eq604}.  }
 With such a transformation, we only need to focus on solving problem (\ref{probm66}) in the rest of the paper. Although it is still a non-convex optimization problem, problem (\ref{probm66})  facilitates the development of an efficient algorithm. % Nevertheless, we will show that the proposed algorithm for solving problem  (\ref{probm66}) will converge to a solution that is also feasible for problem  (\ref{probm6}).
Before proceeding to solve problem (\ref{probm66}), we first show the effect of the users' MRRs on the maximum objective  value of problem (\ref{probm66}).
%how the max-min rate is affected by the fixed user rate ratio,
\begin{theorem}\label{theorem1}
The maximum objective value of problem (\ref{probm66}) is an element-wise non-increasing function with respect to $\{\theta_k\}$.
\end{theorem}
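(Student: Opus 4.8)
The plan is to establish the claim by a simple monotonicity-of-the-feasible-region argument, using the fact that the MRR parameters $\{\theta_k\}$ enter problem \eqref{probm66} \emph{only} through the single family of constraints \eqref{eq012}.

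First, I would fix two MRR vectors $\{\theta_k\}$ and $\{\theta_k'\}$ with $\theta_k'\ge\theta_k$ for every $k\in\K$, and let $\eta^\star(\{\theta_k\})$ denote the optimal value of problem \eqref{probm66} under the MRR vector $\{\theta_k\}$; the goal is to show $\eta^\star(\{\theta_k'\})\le\eta^\star(\{\theta_k\})$. I would then note that the objective $\eta$ and all constraints of \eqref{probm66} other than \eqref{eq012} --- namely \eqref{eq601}, \eqref{eq603}, \eqref{eq6033}, \eqref{eq604}, \eqref{eq605}, \eqref{eq606}, \eqref{eq607} --- do not involve $\{\theta_k\}$ at all, so it suffices to compare the constraint sets cut out by \eqref{eq012}.

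Next, I would show that any $(\eta,\A,\pow,\Q)$ feasible for \eqref{probm66} under $\{\theta_k'\}$ is also feasible under $\{\theta_k\}$. The only constraint to verify is \eqref{eq012}. Because $p_k[n]\ge 0$ and $0\le\alpha_k[n]\le 1$, the instantaneous rate obeys $r_k[n]\ge 0$ for all $k,n$ (with the standard convention that $r_k[n]=0$ when $\alpha_k[n]=0$). Combined with $\theta_k'\ge\theta_k\ge 0$, this yields $\theta_k\eta\le r_k[n]$: if $\eta\ge 0$ then $\theta_k\eta\le\theta_k'\eta\le r_k[n]$ by \eqref{eq012} for $\{\theta_k'\}$, while if $\eta<0$ then $\theta_k\eta\le 0\le r_k[n]$. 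Hence \eqref{eq012} holds for $\{\theta_k\}$, so the feasible region of \eqref{probm66} under $\{\theta_k'\}$ is contained in that under $\{\theta_k\}$. Maximizing the common objective $\eta$ over the smaller set then gives $\eta^\star(\{\theta_k'\})\le\eta^\star(\{\theta_k\})$, and since this holds for every coordinatewise increase of the MRR vector, $\eta^\star(\cdot)$ is element-wise non-increasing, as claimed.

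I do not foresee a genuine obstacle here, as the proof is just a feasible-set inclusion; the only delicate point is the implication for \eqref{eq012}, which hinges on $r_k[n]\ge 0$ and on the MRRs being nonnegative, so that rescaling $\eta$ by a larger nonnegative factor can only tighten the constraint. If one wishes to avoid any discussion of the sign of $\eta$ at a feasible point, one can also observe upfront that \eqref{probm66} is always feasible with $\eta\ge 0$ --- a static UAV with a time-invariant bandwidth and power allocation satisfies all constraints for any $\{\theta_k\}$ and produces $r_k[n]=R_k\ge\eta$ --- although, as the case split above shows, this stronger observation is not actually needed.
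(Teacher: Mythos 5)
Your proposal is correct and takes essentially the same route as the paper: both arguments rest on the observation that $\{\theta_k\}$ enters problem \eqref{probm66} only through constraint \eqref{eq012}, so a solution feasible under the larger MRR vector remains feasible under the smaller one, and the optimal value can only shrink as the feasible set shrinks. The only difference is cosmetic --- the paper writes $r_k^*[n]\ge\theta_k^*\eta^*\ge\hat\theta_k\eta^*$ directly (implicitly using $\eta^*\ge 0$ at the optimum), whereas you handle the sign of $\eta$ explicitly via a short case split, which is a slightly more careful rendering of the same inequality.
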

\begin{proof}
{Denote the optimal solutions of problem (\ref{probm66}) with ${\bm{\theta}}^*= \{\theta^*_k,  k\in \K\}$ and ${\hat{\bm{\theta}}}= \{\hat\theta_k, k\in \K\}$ by $S^*=\{\eta^{*}, \alpha^*_k[n],p^*_k[n], \q^*[n],  k\in \K,n\in\N \}$ and  $\hat{S}=\{\hat{\eta}, \hat{\alpha}_k[n],\hat{p}_k[n], \hat{\q}[n], k\in \K,n\in\N \}$, respectively. %The corresponding objective values are denoted by $$ and  $$, respectively.
To prove Theorem \ref{theorem1}, we only need to show that  $\eta^*\leq \hat{\eta}$ holds when ${\hat{\bm{\theta}}} \preceq {\bm{\theta}}^*$, where $\preceq$ indicates element-wisely less than or equal to, i.e.,  $\hat{\theta}_k\leq \theta^*_k$, $\forall\, k$. Note that in problem (\ref{probm66}), the MRRs are only involved in constraint (\ref{eq012}). Thus, we have the following inequalities
\begin{align}
r^*_k[n]=\alpha^*_k [n] \log_2 \left( 1 +\frac{p^*_k[n]\gamma_0}{\alpha^*_k[n](H^2+||\mathbf{q}^*[n]-\mathbf{w}_k||^2)} \right)  \geq \theta^*_k\eta^*\geq \hat{\theta}_k\eta^*, \forall\, k, n,
\end{align}
which implies that $S^*$ is also a feasible solution of problem  (\ref{probm66}) with $\hat{{\bm{\theta}}}$. Since $\hat{\eta}$ is the maximum objective value of problem  (\ref{probm66}) with $\hat{{\bm{\theta}}}$, it follows that $\eta^{*}\leq \hat{\eta}$, which thus completes the proof of Theorem \ref{theorem1}. }
\end{proof}

Theorem \ref{theorem1} sheds light on the fundamental tradeoff between the max-min average throughput and the user communication delay requirement: as the required MRR  $\theta_k$ increases for any user $k$, the max-min average throughput of the system  decreases in general. This is because imposing more stringent minimum-rate requirements on users fundamentally limit the UAV's mobility to fly closer to any user to achieve a better channel since at the same time it needs to  meet the minimum-rate requirements of other users that are not in its heading direction and thus will have degraded channels with it. As a result,  the degree of freedom for exploiting the UAV's mobility via its trajectory design is restricted by such delay/MRR constraints,  thus leading to lower max-min average throughput.
\section{Proposed Solution}
In this section, we propose an efficient parameter-assisted block coordinate descent algorithm for solving problem (\ref{probm66}). Specifically, we first optimize the bandwidth and power allocation for given UAV trajectory and then optimize the UAV trajectory for given bandwidth and power allocation. These two optimization problems are solved alternately until convergence is achieved.
\subsection{Joint Bandwidth and Power Allocation}
Besides being a subproblem of problem \eqref{probm66}, the bandwidth and power allocation optimization problem for given UAV trajectory may also correspond to a practical scenario when the UAV trajectory is pre-specified due to other tasks such as aerial imaging,  rather than being optimized for achieving best communication performance. Specifically, for any given UAV trajectory $\Q$, the bandwidth and power allocation $\{\A,\pow\}$ can be optimized by solving the following problem,
% \begin{subequations}\label{probm7}
% \begin{align}
%&\max  \limits_{\mathbf{A},\mathbf{P}} ~~~~ ~\eta  \nonumber\\ %\label{probm6}
%  %\sum_{n=1}^{N}\alpha_{i}[n]  \log_2\left( 1 +\frac{\gamma_0}{H^2+||\mathbf{q}[n]-\mathbf{w}_i||^2} \right)     \nonumber  \\
%&~~\text{s.t.}  ~~~\frac{1}{N}\sum_{n=1}^{N}  \alpha_k [n] \log_2 \left( 1 +\frac{p_k[n]h_k[n]}{\alpha_k[n]BN_0} \right)  \geq \eta, \forall\, k, \label{eq012} \\
%&~~~~~~~ ~\alpha_k [n] \log_2 \left( 1 +\frac{p_k[n]h_k[n]}{\alpha_k[n]BN_0} \right) \geq \theta_k\left(\frac{1}{N}\sum_{n=1}^{N}  \alpha_k [n] \log_2 \left( 1 +\frac{p_k[n]h_k[n]}{\alpha_k[n]BN_0} \right) \right), \forall\, k, n, \label{eq701} \\
%&~~~~~~~ ~  \sum_{k=1}^{K}p_{k}[n]\leq P_{\max}, \forall\, n,  \label{eq702} \\
%&~~~~~~~ ~  \sum_{k=1}^{K}\alpha_{k}[n]\leq 1, \forall\, n,  \label{eq703} \\
%&~~~~~~~ ~ 0\leq\alpha_{k}[n]\leq 1, \forall\, k, n.   \label{eq704}
% \end{align}
% \end{subequations}

  \begin{subequations}\label{probm7}
 \begin{align}
&\max  \limits_{\eta, \mathbf{A},\mathbf{P}} ~~~~ ~\eta  \\ %\label{probm6}
  %\sum_{n=1}^{N}\alpha_{i}[n]  \log_2\left( 1 +\frac{\gamma_0}{H^2+||\mathbf{q}[n]-\mathbf{w}_i||^2} \right)     \nonumber  \\
&~~\text{s.t.}  ~~~\frac{1}{N}\sum_{n=1}^{N}  \alpha_k [n] \log_2 \left( 1 +\frac{p_k[n]g_k[n]}{\alpha_k[n]} \right)  \geq \eta, \forall\, k, \label{eq700} \\
&~~~~~~~ ~\alpha_k [n] \log_2 \left( 1 +\frac{p_k[n]g_k[n]}{\alpha_k[n]} \right) \geq \theta_k\eta, \forall\, k, n, \label{eq701} \\
&~~~~~~~ ~  \sum_{k=1}^{K}p_{k}[n]\leq P_{\max}, \forall\, n,  \label{eq702} \\
&~~~~~~~~~ ~  p_{k}[n]\geq 0, \forall\, k, n,  \label{eq7022} \\
&~~~~~~~ ~  \sum_{k=1}^{K}\alpha_{k}[n]\leq 1, \forall\, n,  \label{eq703} \\
&~~~~~~~ ~ 0\leq\alpha_{k}[n]\leq 1, \forall\, k, n,   \label{eq7033}
 \end{align}
 \end{subequations}
 where $g_k[n] \triangleq \frac{h_k[n] }{BN_0}$.
Define $\alpha_k [n] \log_2 \left( 1 +\frac{p_k[n]g_k[n]}{\alpha_k[n]} \right)\triangleq 0$ when $\alpha_k [n]=0$, $\forall\, k,n$, such that the LHSs of both \eqref{eq700} and \eqref{eq701} are continuous with respect to $\alpha_k [n]$ over the whole domain $0 \leq \alpha_k [n]\leq 1$. {As such, problem (\ref{probm7}) is a convex optimization problem since  $\alpha_k [n] \log_2 \left( 1 +\frac{p_k[n]g_k[n]}{\alpha_k[n]} \right)$ in \eqref{eq700} and \eqref{eq701} is jointly concave with respect to $\alpha_k [n]$ and $p_k[n]$ and \eqref{eq702}-\eqref{eq7033} are all affine constraints. Furthermore, it can be verified that Slater's constraint qualification is satisfied for problem (\ref{probm7}) \cite{Boyd}. Therefore, strong duality holds and  the duality gap between problem (\ref{probm7}) and its dual problem is thus zero,
%This means that the optimal solution of problem (\ref{eq13c1}) can be obtained by applying the Lagrange duality theory \cite{ng2012energy3}.
 which means that the optimal solution can be obtained efficiently by applying the Lagrange duality.} The partial Lagrange function of  problem (\ref{probm7}) can be expressed as
 %\begin{shrinkeq}{-1.2ex}
{\small \begin{align}\label{eq:lagr}
&~\mathcal{L}(\eta,\A,\pow, \La, \Mu, \Be, \Nu)  \nonumber\\
=& \eta + \sum_{k=1}^{K}\lambda_k\left(  \frac{1}{N}\sum_{n=1}^{N}  \alpha_k [n] \log_2 \left( 1 +\frac{p_k[n]g_k[n]}{\alpha_k[n]} \right)  - \eta \right) +\sum_{n=1}^{N}\beta_n \left(P_{\max} - \sum_{k=1}^{K}p_{k}[n]\right) \nonumber\\
&+\sum_{k=1}^{K}\sum_{n=1}^{N} \mu_{k,n}\left( \alpha_k [n] \log_2 \left( 1 +\frac{p_k[n]g_k[n]}{\alpha_k[n]} \right)  - \theta_k\eta \right)  + \sum_{n=1}^{N}\nu_n \left(1 -\sum_{k=1}^{K}\alpha_{k}[n]\right)\nonumber\\
=&\left(1-\sum_{k=1}^{K}\lambda_k-\sum_{k=1}^{K}\sum_{n=1}^{N} \mu_{k,n}\theta_k\right)\eta +\sum_{k=1}^{K} \sum_{n=1}^{N}\left(\frac{\lambda_k}{N}+\mu_{k,n}\right)  \alpha_{k}[n] \log_2 \left( 1 +\frac{p_k[n]g_k[n]}{\alpha_k[n]} \right)  \nonumber\\
&-\sum_{k=1}^{K} \sum_{n=1}^{N}\beta_{n}p_{k}[n] - \sum_{k=1}^{K} \sum_{n=1}^{N}\nu_{n}\alpha_{k}[n] + \sum_{n=1}^{N}\beta_{n}P_{\max} + \sum_{n=1}^{N}\nu_{n},
\end{align}}
%\end{shrinkeq}
where  $\bm{\lambda}= \{\lambda_{k},\forall\,k\}$, $\bm{\mu}=\{\mu_{k,n},\forall\,k,n\}$, $\bm{\beta}=\{\beta_{n},\forall\,n\}$,  and  $\bm{\nu}=\{\nu_{n},\forall\,n\}$ are the non-negative Lagrange multipliers associated with  constraints \eqref{eq700}, \eqref{eq701}, \eqref{eq702},  and \eqref{eq703}, respectively.  The boundary constraints  \eqref{eq7022} and  \eqref{eq7033} will be absorbed into the optimal solution in the following.
Accordingly, the dual function is given by
 \begin{subequations}\label{prob:dualf}
\begin{align}
f( \bm{\lambda}, \bm{\mu}, \bm{\beta}, \bm{\nu}) =  &\max \limits_{\eta, \A,\pow} ~\mathcal{L}(\eta,\A,\pow, \La, \Mu, \Be, \Nu)  \\
&~\text{s.t.} ~~~  p_{k}[n]\geq 0, \forall\, k, n, \\
& ~~~~~~~      0\leq\alpha_{k}[n]\leq 1, \forall\, k, n,
\end{align}
for which the following lemma holds.
\begin{lemma}\label{lem_multiplier}
To make $f( \bm{\lambda}, \bm{\mu}, \bm{\beta}, \bm{\nu})$ bounded from the above in \eqref{prob:dualf}, i.e., $f( \bm{\lambda}, \bm{\mu}, \bm{\beta}, \bm{\nu})<  + \infty$,  it follows that $\sum_{k=1}^{K}\lambda_k-\sum_{k=1}^{K}\sum_{n=1}^{N} \mu_{k,n}\theta_k= 1$ must hold.
\end{lemma}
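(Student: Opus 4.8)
The plan rests on the observation that, in the inner maximization defining $f(\bm{\lambda},\bm{\mu},\bm{\beta},\bm{\nu})$ in \eqref{prob:dualf}, the scalar $\eta$ is a \emph{free} optimization variable: the only constraints retained in \eqref{prob:dualf} are $p_{k}[n]\ge 0$ and $0\le\alpha_{k}[n]\le 1$, and neither involves $\eta$. Hence $\eta$ ranges over all of $\mathbb{R}$, and the supremum over $\eta$ can be finite only if $\mathcal{L}$ is not driven to $+\infty$ by some admissible choice of $\eta$. This reduces the boundedness question for $f$ to a one-dimensional question about the dependence of $\mathcal{L}$ on $\eta$.

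First I would group the Lagrangian \eqref{eq:lagr} by its dependence on $\eta$. Every $\eta$-bearing contribution, namely the $+\eta$ from the objective, the $-\lambda_k\eta$ coming from each \eqref{eq700}-multiplier, and the $-\mu_{k,n}\theta_k\eta$ coming from each \eqref{eq701}-multiplier, collects into a single product of $\eta$ with an affine function of the multipliers, exactly as displayed in the second equality of \eqref{eq:lagr}; all remaining terms depend only on $\A$ and $\pow$. Call the resulting coefficient of $\eta$ the quantity $c(\bm{\lambda},\bm{\mu})$.

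Next I would argue by contradiction on the sign of $c$. Fix any admissible $(\A,\pow)$; then $\eta\mapsto\mathcal{L}$ is affine with slope $c$. If $c\neq 0$, sending $\eta\to+\infty$ when $c>0$, or $\eta\to-\infty$ when $c<0$, drives $\mathcal{L}\to+\infty$, so that $f=+\infty$, contradicting the assumed boundedness. Therefore $f<+\infty$ forces $c=0$, i.e. the coefficient attached to the unconstrained variable must vanish. Equating the collected coefficient to zero is precisely the stated linear relation $\sum_{k=1}^{K}\lambda_k-\sum_{k=1}^{K}\sum_{n=1}^{N}\mu_{k,n}\theta_k=1$ among the dual variables, which proves the lemma.

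The step I expect to require the most care is the sign bookkeeping when the $\eta$-terms are gathered from the two families of inequality multipliers $\{\lambda_k\}$ and $\{\mu_{k,n}\}$ together with the $+\eta$ of the objective; once \eqref{eq:lagr} is placed in its grouped form the identification of $c$ and the boundedness argument are both immediate. I would also remark, for orientation, that $c=0$ is the \emph{only} restriction contributed by the free variable $\eta$: finiteness of the full dual function additionally demands that the $(\A,\pow)$-dependent part of $\mathcal{L}$ stay bounded, but that is controlled by the remaining multipliers and plays no role in the present claim, which isolates exactly the condition forced by the unconstrained $\eta$.
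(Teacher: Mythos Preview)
Your proposal is correct and follows essentially the same approach as the paper: both arguments exploit that $\eta$ is an unconstrained variable in \eqref{prob:dualf}, so the affine dependence of $\mathcal{L}$ on $\eta$ forces the coefficient of $\eta$ in \eqref{eq:lagr} to vanish, which is established by the same contradiction (sending $\eta\to\pm\infty$ when the coefficient is nonzero). Your write-up is somewhat more detailed in isolating the role of $\eta$ and remarking that the $(\A,\pow)$-part is irrelevant to this particular claim, but the core idea is identical.
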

\begin{proof}
This is shown by contradiction. If $\sum_{k=1}^{K}\lambda_k-\sum_{k=1}^{K}\sum_{n=1}^{N} \mu_{k,n}\theta_k>1$ or  $\sum_{k=1}^{K}\lambda_k-\sum_{k=1}^{K}\sum_{n=1}^{N} \mu_{k,n}\theta_k<1$, it follows that $f( \bm{\lambda}, \bm{\mu}, \bm{\beta}, \bm{\nu}) \rightarrow + \infty$ by setting $\eta \rightarrow - \infty$ or $ \eta \rightarrow + \infty$. Thus, neither of the above two inequalities can be true and the lemma is proved.
\end{proof}
 \end{subequations}
 Lemma \ref{lem_multiplier} imposes additional constraints for dual variables  $\bm{\lambda}$ and $\bm{\mu}$. As such, the  dual problem of problem (\ref{probm7}) is given by
 \begin{subequations}\label{prob:dual}
\begin{align}
 &\min \limits_{\La, \Mu, \Be, \Nu} ~f( \La, \Mu, \Be, \Nu)  \\
&~~\text{s.t.} ~~ \sum_{k=1}^{K}\lambda_k-\sum_{k=1}^{K}\sum_{n=1}^{N} \mu_{k,n}\theta_k= 1, \label{eq:13b}\\
&~~~ ~~~~~  \La \succeq 0, \Mu \succeq 0, \Be \succeq 0, \Nu \succeq 0.
\end{align}
 \end{subequations}
Next, we show how to obtain the primal optimal solution by applying the Lagrange duality.

\subsubsection{Obtaining $f( \La, \Mu, \Be, \Nu)$ by Solving Problem \eqref{prob:dualf} }
With the given dual variables, problem \eqref{prob:dualf} can be decomposed into $KN+1$ subproblems that can be solved independently in parallel. Specifically,  one subproblem is  for optimizing $\eta$ and the other $KN$ subproblems are for optimizing $\A$ and $\pow$, i.e.,
\begin{align}\label{subprob:eta}
\max \limits_{\eta} ~\left(1-\sum_{k=1}^{K}\lambda_k-\sum_{k=1}^{K}\sum_{n=1}^{N} \mu_{k,n}\theta_k\right)\eta,
\end{align}
\begin{subequations}\label{subprob:bp}
\begin{align}
&\max \limits_{ \alpha_{k}[n], p_k[n]} ~\left(\frac{\lambda_k}{N}+\mu_{k,n}\right)  \alpha_{k}[n] \log_2 \left( 1 +\frac{p_k[n]g_k[n]}{\alpha_k[n]} \right) - \beta_{n}p_{k}[n] -\nu_{n}\alpha_{k}[n] \\
&~~~~~\text{s.t.} ~~~~  p_{k}[n]\geq 0, \forall\, k, n, \\
& ~~~~~~~~~~~~      0\leq\alpha_{k}[n]\leq 1, \forall\, k, n,
\end{align}
\end{subequations}
where each subproblem in \eqref{subprob:bp} is for user $k$ in time slot $n$. For problem \eqref{subprob:eta}, since $\sum_{k=1}^{K}\lambda_k-\sum_{k=1}^{K}\sum_{n=1}^{N} \mu_{k,n}\theta_k= 1$ holds as in Lemma \ref{lem_multiplier}, the objective value is zero which is independent of the value of $\eta$. This implies that we can choose any arbitrary real number as the optimal solution, denoted by $\eta^{\star}$.  Without loss of generality, we simply set $\eta^{\star}=0$ for the purpose of obtaining the dual function $f( \La, \Mu, \Be, \Nu)$ and updating the dual variables\footnote{We note that $\eta^{\star}=0$ cannot be the optimal primal solution to problem \eqref{probm7}. How to obtain the optimal primal solution for this problem will be discussed later in Section III-A-3). }. Next, we consider problem \eqref{subprob:bp}. Since problem \eqref{subprob:bp} is jointly concave with respect to  $p_{k}[n]$ and $\alpha_k[n]$, the solution that satisfies the Karush-Kuhn-Tucker (KKT) conditions is also the optimal solution.  By taking the derivative of the objective function of \eqref{subprob:bp} with respect to $p_{k}[n]$, the optimal power allocation, denoted by $p^{\star}_{k}[n]$, can be obtained as
\begin{align}\label{powallocation}
p^{\star}_{k}[n]= \alpha_{k}[n] \left[ \frac{\lambda_k +N\mu_{k,n}}{N\beta_n\ln2} - \frac{1 }{g_{k}[n]}  \right]^{+},
\end{align}
where  $[x]^+\triangleq \max\{x,0\}$. Let $\widetilde{p}_{k}[n] \triangleq  \frac{p^{\star}_{k}[n]}{\alpha_{k}[n]}= \left[ \frac{\lambda_k +N\mu_{k,n}}{N\beta_n\ln2} - \frac{1}{g_{k}[n]}  \right]^{+}$, which can be regarded as the power spectrum density of user $k$ in time slot $n$.
Note that in \eqref{powallocation}, the power allocation follows a multi-level water-filling structure. Substituting the obtained $p^{\star}_{k}[n]$ into problem \eqref{subprob:bp} yields
\begin{subequations}\label{subprob:bp2}
\begin{align}
&\max \limits_{ \alpha_{k}[n] } ~\left(\frac{\lambda_k+N\mu_{k,n}}{N}\log_2 \left( 1 +\widetilde{p}_k[n]g_k[n] \right) - \beta_{n}\widetilde{p}_{k}[n] -\nu_{n}\right)\alpha_{k}[n]\\
&~~\text{s.t.} ~~~ 0\leq\alpha_{k}[n]\leq 1, \forall\, k, n.
\end{align}
\end{subequations}
It is evident that problem \eqref{subprob:bp2} is a linear program (LP) with only one optimization variable, $\alpha_{k}[n]$. Thus, the optimal bandwidth allocation, denoted by $\alpha^{\star}_{k}[n]$, can be obtained as
%\begin{subequations}
\begin{align}\label{eq:bandwidth}
\alpha^{\star}_k[n]=\left\{\begin{aligned}
                       1,\qquad & \text{if} \left(\frac{\lambda_k+N\mu_{k,n}}{N}\log_2 \left( 1 +\widetilde{p}_k[n]g_k[n] \right) - \beta_{n}\widetilde{p}_{k}[n] -\nu_{n}\right) > 0, \\
                       {a},\qquad & \text{if} \left(\frac{\lambda_k+N\mu_{k,n}}{N}\log_2 \left( 1 +\widetilde{p}_k[n]g_k[n] \right) - \beta_{n}\widetilde{p}_{k}[n] -\nu_{n}\right) = 0, \\
                       0,\qquad & \text{otherwise}, \forall\, k,n.
                    \end{aligned}
             \right.
\end{align}
%\end{subequations}
%\begin{subequations}
%\begin{equation}
%  &\left\{
%   \begin{aligned}
%   x + y &= z,  \\
%   c+1 + 2 &= 3.  \\
%   \end{aligned}
%  \right.
%\end{equation}
%\end{subequations}
% \begin{numcases}{|x|=}
% x, & for $x \geq 0$\\
% -x, & for $x < 0$
% \end{numcases}
where $a$ can be any arbitrary real number between 0 and 1 since the objective value of problem \eqref{subprob:bp2} is not affected in this case. For simplicity, we set $a=0$ as for the case of $\eta$. In general,  \eqref{eq:bandwidth} cannot provide the optimal primal solution for problem \eqref{probm7} even with optimal dual variables. Nevertheless, with the above proposed solutions to problems \eqref{subprob:eta} and \eqref{subprob:bp}, the dual function $f( \La, \Mu, \Be, \Nu)$ is  obtained.
%It is easy to verify that if constraints \eqref{eq701} are not considered, i.e., $\mu_{k,n}=0$,

\subsubsection{Obtaining Optimal Dual Solution to Problem \eqref{prob:dual} }
After obtaining  $(\eta^{\star}, \A^{\star}, \pow^{\star})$ for given $\bm{\lambda}, \bm{\mu}, \bm{\beta}$, and $\bm{\nu}$, we next solve the dual problem \eqref{prob:dual} to find the optimal dual variables that maximize $f( \La, \Mu, \Be, \Nu)$. Note that although the dual function  $f( \La, \Mu, \Be, \Nu)$ is always convex by definition, it is non-differentiable in general.  As a result, the commonly used subgradient based method such as the ellipsoid method,  can be used to solve  problem \eqref{prob:dual}. %The main idea of the ellipsoid method is to .
In each iteration, the dual variables $\La, \Mu, \Be,$ and $\Nu$ are updated based on the subgradients of both the objective function and the constraint functions in problem \eqref{prob:dual}. Specifically, the subgradient of the objective function is denoted by
$\bm{s}_0= [ \Delta\La^T, \Delta\Mu^T,  \Delta\Be^T,  \Delta \Nu^T ]^{T} $ where $\Delta\La, \Delta\Mu,  \Delta\Be$, and  $\Delta\Nu$ are  vectors with the elements respectively given by
\begin{align}
\Delta\lambda_{k}& =  \frac{1}{N}\sum_{n=1}^{N}  \alpha_k [n] \log_2 \left( 1 +\frac{p_k[n]g_k[n]}{\alpha_k[n]} \right), \forall\,k, \\
\Delta\mu_{k,n}  &=  \alpha_k [n] \log_2 \left( 1 +\frac{p_k[n]g_k[n]}{\alpha_k[n]} \right),  \forall\,k,n, \\
\Delta\beta_{n} &= P_{\max} - \sum_{k=1}^{K}p_{k}[n], \forall\,n,\\
\Delta\nu_{n} &= 1 -\sum_{k=1}^{K}\alpha_{k}[n], \forall\,n.
\end{align}
Furthermore, the equality constraint in \eqref{eq:13b} is equivalent to two inequality constraints: $1- \sum_{k=1}^{K}\lambda_k-\sum_{k=1}^{K}\sum_{n=1}^{N} \mu_{k,n}\theta_k\leq 0$ and $-1 + \sum_{k=1}^{K}\lambda_k+\sum_{k=1}^{K}\sum_{n=1}^{N} \mu_{k,n}\theta_k\leq 0$. Thus, the subgradient  of the former constraint function is denoted by $\bm{s}_1= [ \Delta\La^T, \Delta\Mu^T,  \Delta\Be^T,  \Delta \Nu^T ]^{T} $ where the corresponding elements are given by  $\Delta \lambda_k = -1, \Delta \mu_{k,n} = -\theta_k, \Delta \beta_{n} = 0, \Delta \nu_{n} = 0$. In addition, the subgradient of the latter constraint function is given by $\bm{s}_2=-\bm{s}_1$.
With the above subgradients, the dual variables can be updated by the constrained ellipsoid method toward the optimal solution with global convergence \cite{ellipsoid}.

  \begin{algorithm}[t]
\caption{ Joint bandwidth and power allocation algorithm for solving problem  (\ref{probm7}).}\label{Algo:bdpow}
\begin{algorithmic}[1]
\STATE Initialize $\La, \Mu, \Be$, $\Nu$, and the ellipsoid.
\REPEAT
\STATE Solve problem \eqref{subprob:eta} and \eqref{subprob:bp} to obtain $\eta^{\star}$, $\A^{\star}$, and $\pow^{\star}$.
\STATE Compute the subgradients of the objective function and the constraint functions in problem \eqref{prob:dual}.
\STATE Update $\La$, $\Mu$, $\Be$, and $\Nu$ by using the constrained ellipsoid method.
\UNTIL $\La$, $\Mu$, $\Be$, and $\Nu$ converge within a prescribed accuracy.
\STATE Set ($\La^*$, $\Mu^*$, $\Be^*$, $\Nu^*$) $\leftarrow$  ($\La$, $\Mu$, $\Be$, $\Nu$).
\STATE Obtain $\eta^*$, $\A^*$, and $\pow^*$ by solving problem \eqref{probm:primary} and  using \eqref{powallocation}.
\end{algorithmic}
\end{algorithm}
\subsubsection{Constructing Optimal Primal Solution to Problem \eqref{probm7} }
Based on the obtained dual optimal solution  $\La^*$, $\Mu^*$, $\Be^*$, and  $\Nu^*$, it remains to obtain the optimal primal solution  $\{\eta^*, \A^*,\pow^*\}$ to problem \eqref{probm7}.
It is worth pointing out that for a convex optimization problem, the optimal solution that maximizes the Lagrangian function is the optimal primal solution if and only if such a solution is unique and primal feasible \cite{Boyd}. However, in our case, the optimal solutions $\eta^{\star}$ and  $\A^{\star}$ that maximize $\mathcal{L}(\eta,\A,\pow, \La, \Mu, \Be, \Nu)$ are not unique given that $\sum_{k=1}^{K}\lambda_k-\sum_{k=1}^{K}\sum_{n=1}^{N} \mu_{k,n}\theta_k= 1$ and $\left(\frac{\lambda_k+N\mu_{k,n}}{N}\log_2 \left( 1 + \widetilde{p}_k[n]g_k[n] \right) - \beta_{n}\widetilde{p}_{k}[n] -\mu_{n}\right) = 0$ as in \eqref{subprob:eta} and \eqref{eq:bandwidth}. Therefore, additional steps are needed in order to construct the optimal primal  solution. The key observation is that with given $\La^*$, $\Mu^*$, and  $\Be^*$, the optimal power spectrum density (the ratio of the optimal power allocation to the optimal bandwidth allocation), i.e., $\widetilde{p}^*_{k}[n] = \frac{p^*_{k}[n]}{\alpha^*_k[n]}$, can be uniquely obtained from \eqref{powallocation}.
 By substituting  $\widetilde{p}^*_{k}[n]$ into the primal problem \eqref{probm7}, we have
 %based on the obtained optimal dual variables.
  \begin{subequations}\label{probm:primary}
 \begin{align}
&\max  \limits_{\eta, \mathbf{A}} ~~~~ ~\eta  \\ %\label{probm6}
  %\sum_{n=1}^{N}\alpha_{i}[n]  \log_2\left( 1 +\frac{\gamma_0}{H^2+||\mathbf{q}[n]-\mathbf{w}_i||^2} \right)     \nonumber  \\
&~~\text{s.t.}  ~~~\frac{1}{N}\sum_{n=1}^{N}  \alpha_k [n] \log_2 \left( 1 + \widetilde{p}^*_k[n]g_k[n] \right)  \geq \eta, \forall\, k, \label{eq:2300} \\
&~~~~~~~ ~\alpha_k [n] \log_2 \left( 1 +\widetilde{p}^*_k[n]g_k[n] \right) \geq \theta_k\eta, \forall\, k, n, \label{eq:2301} \\
&~~~~~~~ ~  \sum_{k=1}^{K} \alpha_k [n]\widetilde{p}^*_{k}[n]\leq P_{\max}, \forall\, n,  \label{eq:2302} \\
%&~~~~~~~~~ ~  p_{k}[n]\geq 0, \forall\, k, n,  \label{eq7022} \\
&~~~~~~~ ~  \sum_{k=1}^{K}\alpha_{k}[n]\leq 1, \forall\, n,  \label{eq2303} \\
&~~~~~~~ ~ 0\leq\alpha_{k}[n]\leq 1, \forall\, k, n.   \label{eq23033}
 \end{align}
  \end{subequations}
Given $\widetilde{p}^*_{k}[n]$, it is easy to observe that problem \eqref{probm:primary} is an LP with respect to  $\A$ and $\eta$, which thus can be efficiently solved by using standard convex optimization solvers such as CVX \cite{cvx}. After obtaining the optimal bandwidth allocation $\A^*$, the corresponding power allocation $\pow^*$ can be obtained as  $p^*_{k}[n] = \widetilde{p}^*_{k}[n] \alpha^*_k[n]$, $\forall\, k,n$.  The details of the procedures for obtaining the optimal solution to problem \eqref{probm7}  are summarized in Algorithm \ref{Algo:bdpow}. { The computational complexity of Algorithm \ref{Algo:bdpow} consists of three parts. The first part is for solving problems \eqref{subprob:eta} and \eqref{subprob:bp}, the second part is for updating the dual variables by the ellipsoid method, and the third part is for solving linear program problem  \eqref{probm:primary}. In step 3) of Algorithm 1, the complexity of solving problem  \eqref{subprob:eta} is $O(1)$ and that of solving \eqref{subprob:bp} is $O(KN)$. The complexities of step 4) and 5) are $O(KN)$ and $O(K^2N^2)$, respectively. Since the ellipsoid method takes $O(K^2N^2)$ to converge, the total complexity for step 2) to 6) is $O(K^4N^4)$ \cite{ellipsoid}. The complexity of  solving   \eqref{probm:primary} is $O(K^3N^3)$. Therefore, the total complexity of Algorithm \ref{Algo:bdpow} is $O(K^4N^4)$.  }
 %from \eqref{powallocation}.

\subsection{UAV Trajectory Optimization}
Given any feasible bandwidth and power allocation $\{\A,\pow\}$, problem \eqref{probm66} is simplified into the following problem for  optimizing the UAV trajectory $\Q$ only, i.e.,
%the UAV trajectory $\Q$ can be optimized by solving the following problem

\begin{subequations}\label{probm8}
 \begin{align}
&\max  \limits_{\eta, \mathbf{Q}} ~~~~ ~\eta  \\ %\label{probm6}
  %\sum_{n=1}^{N}\alpha_{i}[n]  \log_2\left( 1 +\frac{\gamma_0}{H^2+||\mathbf{q}[n]-\mathbf{w}_i||^2} \right)     \nonumber  \\
&~~\text{s.t.}  ~~~\frac{1}{N}\sum_{n=1}^{N}  \alpha_k [n] \log_2 \left( 1 +\frac{\gamma_k[n]}{H^2+||\mathbf{q}[n]-\mathbf{w}_k||^2} \right)  \geq \eta, \forall\, k, \label{eq0w12} \\
&~~~~~~~ ~\alpha_k [n] \log_2 \left( 1 +\frac{\gamma_k[n]}{H^2+||\mathbf{q}[n]-\mathbf{w}_k||^2} \right) \geq \theta_k\eta, \forall\, k, n, \label{eqq012} \\
&~~~~~~~ ~||\mathbf{q}[n+1]-\mathbf{q}[n]||^2 \leq S_{\max}^2,  n=1,...,N-1,  \label{eq14}\\
&~~~~~~~ ~ \mathbf{q}[1]=\mathbf{q}[N], \label{eq15}
 \end{align}
 \end{subequations}
 where $\gamma_k[n] \triangleq \frac{p_k[n]\rho_0}{\alpha_k[n]BN_0}$. Note that problem \eqref{probm8} is not a convex optimization problem since the LHSs of constraints \eqref{eq0w12}  and \eqref{eqq012} are not concave with respect to $\mathbf{q}[n]$.  In general, there is no efficient method to obtain the optimal solution for such a non-convex problem. However, we observe that  the LHSs of both \eqref{eq0w12}  and \eqref{eqq012} are convex with respect to $||\mathbf{q}[n]-\mathbf{w}_k||^2$.
% Since problem (\ref{probm8}) is a non-convex optimization problem, we will show that at least a locally optimal solution can be obtained efficiently by applying the successive convex optimization technique. Although both (\ref{eq0w12}) and (\ref{eqq012}) are non-convex constraints, we observe that the LHS of them are convex with respect to $||\mathbf{q}[n]-\mathbf{w}_k||^2$.s
 Note that for a convex function, its
first-order Taylor expansion is the global under-estimator at any point \cite{Boyd}. This thus motivates us to leverage the successive convex optimization technique to tackle the non-convex problem \eqref{probm8} by an iterative algorithm, where in each iteration, the LHSs of both \eqref{eq0w12}  and \eqref{eqq012} are replaced by more tractable functions derived from the Taylor expansion at a given local point.  Specifically, with given local point $\mathbf{q}^r[n]$,  we have the following inequality
   \begin{align}\label{eq155}
{r}_k[n]&=  \alpha_k[n]   \log_2\left( 1 +\frac{\gamma_0}{H^2+||\mathbf{q}[n]-\mathbf{w}_k||^2} \right)  \nonumber\\
& \geq \alpha_k[n]\left(-A^r_k[n]\left(||\mathbf{q}[n]-\mathbf{w}_k||^2 -||\mathbf{q}^r[n]-\mathbf{w}_k||^2 \right) + B^r_k[n]\right)  \nonumber\\
&\triangleq {r}^{{\rm lb},r}_k[n],
\end{align}
where %$\mathbf{q}^r[n]\in \Q^r$ is  the location of the UAV in time slot $n$ of the $r$-th iteration, and
\begin{align}
%B^r_i[n]&=  \frac{A^r_i[n]}{H^2+||\mathbf{q}^r[n]-\mathbf{w}_i||^2}, \forall\, i,n. \label{eq17}\\
%C^r_i[n]&=  \frac{B^r_i[n]\log_2e}{1+A^r_i[n]}, \forall\, i,n. \\
A^r_k[n]& = \frac{\gamma_0\log_2e}{(H^2+||\mathbf{q}^r[n]-\mathbf{w}_k||^2)(H^2+||\mathbf{q}^r[n]-\mathbf{w}_k||^2+\gamma_0)}, \\
B^r_k[n]&= \log_2\left(1+\frac{\gamma_0}{H^2+||\mathbf{q}^r[n]-\mathbf{w}_k||^2}\right), \forall\, k,n.
\end{align}

For any given local point $\Q^r$, define the function $\eta^{{\rm lb}, r}(\A,\Q)= \min \limits_{i\in \mathcal{K}}~ \sum_{n=1}^{N}{r}^{{\rm lb},r}_k[n]$.  With the lower bounds ${r}^{{\rm lb}, r}_k[n]$, $\forall\, k$,  in (\ref{eq155}) and $\Q^r$, problem (\ref{probm8}) is approximated as the following problem %i%n the $r$-th iteration
\begin{subequations} \label{probm30}
 \begin{align}
 &\mathop {\text{max} }\limits_{\eta^{{\rm lb}, r},\Q}~~\eta^{{\rm lb}, r}   \nonumber\\ %
&~~~\text{s.t.} ~~ %C4: \sum_{n=1}^{N}\alpha_i[n]\left( \frac{-B^r_i[n]}{1+A^r_i[n]}\left(||\mathbf{q}[n]-\mathbf{w}_i||^2 -||\mathbf{q}^r[n]-\mathbf{w}_i||^2 \right) + \log_2\left( 1 +A^r_i[n] \right) \right) \geq t, \forall\, i, \nonumber \\
\frac{1}{N}\sum_{n=1}^{N} {r}^{{\rm lb},r}_k[n] \geq \eta^{{\rm lb}, r}, \forall\, k,  \label{eq31} \\
&~~~~~~~~  {r}^{{\rm lb},r}_k[n] \geq \theta_k \eta^{{\rm lb}, r}, \forall\, k, n,  \label{eq32} \\
&~~~~~~~~ ||\mathbf{q}[n+1]-\mathbf{q}[n]||^2\leq  S_{\max}^2,  n=1,...,N-1,  \label{eq313} \\
&~~~~~~~~ \mathbf{q}[1]=\mathbf{q}[N]. \label{eq151}
 \end{align}
 \end{subequations}
Note that constraints (\ref{eq31}), (\ref{eq32}),  and (\ref{eq313}) are all convex quadratic constraints and (\ref{eq151}) is a linear constraint. {Therefore, problem (\ref{probm30}) is a convex quadratically constrained quadratic program (QCQP) that
% the feasible set here is a convex set and problem (\ref{probm30}) is thereby a convex optimization problem that
can be solved within a polynomial complexity  by standard convex optimization solvers such as  CVX \cite{cvx}.} It is worth pointing out that due to the lower bounds adopted from \eqref{eq155}, constraints (\ref{eq31}) and  (\ref{eq32}) imply (\ref{eq0w12}) and \eqref{eqq012}, respectively, but the reverse does not hold in general.
% the lower bound adopted in (\ref{eq31}) suggests that the feasible set of problem (\ref{probm30}) is always a subset of that of problem (\ref{probm55}), while the reverse does not hold in general.
In this regard, the optimal objective value obtained by solving problem (\ref{probm30}) always serves as a lower bound for that of problem  (\ref{probm8}). %

 \subsection{Overall Algorithm Design}
 In this section, we propose an efficient iterative algorithm to solve problem (\ref{probm66}) based on the results in previous two sections. Note that  the conventional block coordinate descent method alternately optimizes one block of optimization variables in each iteration while keeping other blocks of optimization variables fixed, until the convergence is achieved \cite{bertsekas1999nonlinear,hong2016unified}. However, directly applying such a block coordinate descent method for our considered problem, i.e., by alternately optimizing one block from the power and bandwidth allocation variables $\{\A,\pow\}$ in problem \eqref{probm7} and the UAV trajectory variables $\Q$ in problem \eqref{probm8}, with the other block fixed,  may fail to update the UAV trajectory effectively. This can be observed from the MRR constraints in problem (\ref{probm8}), i.e.,
 \begin{align}
\alpha_k [n] \log_2 \left( 1 +\frac{\gamma_k[n]}{H^2+||\mathbf{q}[n]-\mathbf{w}_k||^2} \right) \geq \theta_k\eta, \forall\, k, n. \label{eqq0122}
\end{align}
 For given $\alpha_k[n]$ and $p_k[n]$, $\forall\, k, n$,  since problem (\ref{probm8}) aims to increase $\eta$ by optimizing UAV trajectory $\mathbf{q}[n]$,  the right-hand-side (RHS) of (\ref{eqq0122}) is expected to increase after each iteration. As a result, for users that have met constraints  in \eqref{eq701} with equality in the last iteration, the only way to increase the LHS of \eqref{eqq0122}  in the current iteration is by decreasing $||\mathbf{q}[n]-\mathbf{w}_k||^2$, $\forall\, k$. This implies that in each time slot $n$, the UAV's location $\mathbf{q}[n]$ needs to be updated to decrease the distances from the UAV to all these users. Thus, the freedom for optimizing the UAV trajectory is severely limited which can lead to ineffective update of the UAV trajectory in each iteration. %Fundamentally, this is because the constraint in (\ref{eqq0122}) is too stringent and results in a small feasible region for the UAV trajectory.
 %The fundamental reason that leads to the failure of the block coordinate descent method is that the

 To tackle this issue, we propose a new parameter-assisted block coordinate descent method where each parameter is  a temporary MRR for a corresponding user $k$, denoted by $\theta_{{\rm temp}, k}$, which is set larger than the actual MRR target $\theta_k$, if $0<\theta_k<1$.  The main idea is to use the newly introduced temporary MRR $\theta_{{\rm temp}, k}>\theta_k$ for solving problem (\ref{probm8}) rather than directly using the target  $\theta_k$.
 Specifically, $\theta_{{\rm temp}, k}$ for each user $k$ is gradually decreased before solving problem (\ref{probm8}) in each iteration, until the target MRR $\theta_k$ is achieved for all users. As such,  $\eta$ will increase after each iteration, while the constraints in  (\ref{eqq0122}) will be relaxed due to the decrease of $\theta_{{\rm temp}, k}$'s, which thus permits a more effective  UAV trajectory update in each iteration compared to the conventional block coordinate decent method. Furthermore, the proposed method also generates a feasible solution for the original problem (\ref{probm66}) since $\theta_k$ will be eventually achieved by gradually decreasing $\theta_{{\rm temp}, k}$ with a predefined step size, $\theta_{{\rm step}, k}>0$. The details of the proposed method are summarized in Algorithm \ref{Algo:whole}.

 The convergence of the proposed Algorithm \ref{Algo:whole} can be analyzed as follows.  From step 4, it can be seen that $\theta_k$, $\forall\, k$, is non-increasing  over the iterations. Recall that the objective value of problem  (\ref{probm66}) is element-wise non-increasing with respect to $\theta_k$, $\forall\, k$. Thus, it can be shown that the objective value achieved by Algorithm  \ref{Algo:whole} is non-decreasing in each iteration.  Furthermore,  the objective value of problem  (\ref{probm66}) is bounded from the above. Thus, the proposed Algorithm  \ref{Algo:whole} is guaranteed to converge and the obtained solution is also feasible for the original problem  (\ref{probm6}). {Although the obtained solution is generally suboptimal, we validate the effectiveness of Algorithm 2 via simulations by comparing with other benchmark schemes in Section IV.}

 \subsection{UAV Initial Trajectory and Users' MRR Initialization}
  \begin{algorithm}[t]
\caption{ Parameter-assisted  block coordinate descent algorithm for solving problem (\ref{probm66}).}\label{Algo:whole}
\begin{algorithmic}[1]
\STATE Initialize $\Q^0$, $\theta_{{\rm ini}, k}$, and  $L_{\max}$. Let $r=0$, $\theta_{{\rm temp}, k}=\theta_{{\rm ini}, k}$, and $\theta_{{\rm step}, k}= \frac{\theta_{{\rm ini}, k}- \theta_k}{L_{\max}}$, $\forall\, k$.
\REPEAT
\STATE Solve problem (\ref{probm7}) by applying Algorithm \ref{Algo:bdpow} for given $\Q^r$, and denote the optimal solution as $\{\mathbf{A}^{r+1}, \mathbf{P}^{r+1}\}$.
\STATE $\theta_{{\rm temp}, k}= \max \{ \theta_{{\rm temp},k}-(r+1)\theta_{{\rm step}, k}, \theta_k \}$,  $\forall\, k$.
\STATE Solve problem (\ref{probm8}) for given $\{\mathbf{A}^{r+1}, \mathbf{P}^{r+1}, \Q^{r} , \theta_{{\rm temp},k} \}$, and denote the optimal solution as $\{\mathbf{Q}^{r+1}\}$.
\STATE Update $r=r+1$.
\UNTIL{  $\theta_{{\rm temp},k}= \theta_k$, $\forall\, k$, and the fractional increase of the objective value  is below a threshold $\epsilon>0$.}
\end{algorithmic}
\end{algorithm}

In Algorithm \ref{Algo:whole}, the initial UAV trajectory, $\mathbf{Q}^0=\{\mathbf{q}^0[n], \forall\,n\}$, and the initial temporary users' MRRs, $\theta_{{\rm ini}, k}$, $\forall\, k$, need to be set. First, we propose in the following a general and systematic trajectory initialization scheme based on the simple circular UAV trajectory, which also includes the one proposed in  \cite{wu2017joint} with $\theta_k=0, \forall\, k$,  as a special case. %Two design principles need to be taken into account:
For a circular UAV trajectory, we need to obtain the circle center $\cc_{\rm ini}=[x_{\rm ini}, y_{\rm ini}]^T$ and radius  $r_{\rm ini}$. For convenience, the circle center is set as the geometry center of all users, which is given by $\cc_{\rm ini} = \frac{\sum_{k=1}^{K}{\w}_{k}}{K}$. The main idea for determining the circle radius of the initial circular trajectory is based on the following three observations.
First, it is intuitive that as $\theta_k$ increases, the area covered by the UAV trajectory shall decrease, as implied in Theorem 1.
 As such, the initial circle radius $r_{\rm ini}$ should decrease with $\theta_k$, $\forall\, k$.  Second, when $\theta_k=0$, $\forall\, k$,  the initial circle radius should be simplified to that for the case when the MRR constraints are not considered as in  \cite{wu2017joint}. Third, it is also known that when $\theta_k=1$, $\forall\, k$, the UAV mobility does not provide any performance gain, which implies that the UAV should remain static. In such a case, the UAV trajectory becomes a point and the initial circle radius is thus zero. Based on the above observations, we propose to set the initial circle radius as
 %Specifically, we first calculate the circle radius for the case when no fixed rate ratio is considered, denoted by $r_0$ as in \cite{wu2017joint}. Then,
% the initial circle radius for the general case is computed by
 \begin{align}\label{eq:circleradius}
 r_{\rm ini}=\left(1-\frac{ \sum_{k=1}^{K}\theta_k}{K}\right)r_0,
 \end{align}
 where $r_0$ is the circle radius for the case when $\theta_k=0, \forall\, k$, or no MRR constraints are considered  \cite{wu2017joint}. According to  \cite{wu2017joint},  $r_0$ is obtained as follows: 1) by using the user geometry center $\cc_{\rm ini} $ as the circle center, we first  calculate the radius of a minimum circle that can cover all ground users, denoted by $r_{\min}=\max\limits_{k\in \mathcal K } ||\mathbf{w}_k-\cc_{\rm ini} ||$; 2)  to balance the number of users outside and inside the UAV trajectory circle as well as satisfying the UAV speed constraint, we set $r_0=\min \{\frac{V_{\max}T}{2\pi},\frac{r_{\min} }{2} \}$. From \eqref{eq:circleradius}, it is easy to check that when $\theta_k=0$, $\forall\, k$, $r_{\rm ini}=r_0$ and when $\theta_k=1$, $\forall\, k$, $r_{\rm ini}=0$, as expected. Let $\varphi_n = 2\pi\frac{n-1}{N-1}$, $\forall\, n$. With $\cc_{\rm ini}$ and  $r_{\rm ini}$, the initial circular UAV trajectory $\Q^0$ is given as follows,
 \begin{align}
\q^0[n] = \left[x_{\rm ini} + r_{\rm ini}\cos\varphi_n,  y_{\rm ini} + r_{\rm ini}\sin\varphi_n\right]^{T}, n=1,...,N.
\end{align}

 %Note that in the extreme case when $\theta_k=1$, $\forall\, k$, $r_{\rm initial}=0$, which corresponds that the UAV trajectory is initialized at a point and remain static for the full fixed rate ratio scenario. In contrast, in other extreme case when $\theta_k=0$, $\forall\, k$,  $r_{\rm initial}=r_0$, which corresponds that the UAV trajectory is initialized as no fixed rate ratio is considered in \cite{wu2017joint}.
Next, we initialize the users' MRRs, $\theta_{{\rm ini},k}$, $\forall\,k$. Note that if user $k$ has no MRR constraint, i.e., $\theta_k=0$, then there exists no constraint for user $k$ in \eqref{eqq0122}. Thus, user $k$ will not cause the ineffective UAV trajectory update problem previously described in Section III-C and its initial MRR can be directly set as $\theta_{{\rm ini}, k}=0$. In contrast, as long as user $k$ has a non-trivial MRR target, i.e., $\theta_k>0$,  $N$ corresponding MRR constraints in \eqref{eqq0122}  will be imposed for user $k$ which may cause ineffective updates of the UAV trajectory. Thus, the initial MRRs of users are set as
\begin{equation}\label{eq:initial_ratio}
\theta_{{\rm ini},k}=\left\{\begin{aligned}
                       1,\qquad &  \text{if} ~~ \theta_k >0, \\
                       0, \qquad & \text{if}  ~~\theta_k =0.
                    \end{aligned}
             \right.
\end{equation}

%The initialization of fixed rate ratio is as follows: 1) if user $k$ has no fixed rate requirement, i.e., $\theta_k=0$, then $\theta_{{\rm ini}, k}=0$; 2) if user $k$ has a non trivial fixed rate requirement, i.e., $\theta_k>0$, then $\theta_{{\rm ini}, k}=1$. (Such an initialization scheme is because when $\theta_k=0$, then no fixed rate constraints need to be considered, but for $\theta_k>0$ even if $\theta_k$ approaches to zero, there will be additional $N$ fixed rate constraints imposed for user $k$. )

%\subsection{Alternative Delay Constraints}
%\begin{align}\label{alter}
%\frac{1}{N'}\sum_{n=m}^{m+N'-1}  \alpha_k [n] \log_2 \left( 1 +\frac{\gamma_k[n]}{H^2+||\mathbf{q}[n]-\mathbf{w}_k||^2} \right)  \geq \theta_k\eta, \forall\, m=(l-1)N'+1, l=1,2,...,\frac{N}{N'}.
%\end{align}
%This constraint guarantees that with each time window $N'$, the delay-constrained rate needs to be satisfied. In the extreme case when $N'=1$, it reduces to the case of (\ref{eq012}). As a result, we can control $N'$ to demonstrate the mobility of the UAV. Compared to (\ref{eq012}), this constraint would be more flexible and less stringent in general.

 \section{Numerical Results}
In this section, numerical results are provided to validate the proposed joint OFDMA resource allocation and UAV trajectory design as well as the fundamental throughput-delay tradeoff in UAV-enabled wireless communications. We consider a system with $K=4$ ground users that are located in a horizontal plane as shown in Fig.  \ref{homo1_trj}, marked by `$\times$'s.
{The UAV is assumed to fly at a fixed altitude $H=500$ m \cite{matolak2017air}}.
The total available communication bandwidth is $B = 10$ MHz and the noise power spectrum density at the ground users is assumed to be identical and set as $N_0 = -169$ dBm/Hz. %The receiver noise power is assumed to be $\sigma^2= -110$ dBm.
 The channel power gain at the reference distance $d_0= 1$ m is set as $\rho_0=-50$ dB. Other parameters are set as $P_{\max}=0.1$ W, $V_{\max}=50$ m/s, $T= 270$ s, and $N=540$, respectively, if not specified otherwise. For illustration, all the trajectories in the simulations are sampled every 4 s and the sampled points are marked by `$\triangle$'s.%The transmit power of the UAVs is initialized by the maximum transmit power, i.e., $p_m[n] = P_{\max}, \forall\, m$.

%\begin{figure}[!t]
%\centering
%\subfigure[]{\includegraphics[width=2.1in, height=1.8in]{improved0.eps}}
%\subfigure[]{\includegraphics[width=2.1in, height=1.8in]{improved2.eps}}
%\subfigure[]{\includegraphics[width=2.1in, height=1.8in]{improved4.eps}}
%\subfigure[]{\includegraphics[width=2.1in, height=1.8in]{improved6.eps}}
%\subfigure[]{\includegraphics[width=2.1in, height=1.8in]{improved8.eps}}
%\subfigure[]{\includegraphics[width=2.1in, height=1.8in]{improved10.eps}}
%%\subfigure[]{\includegraphics[width=2.1in, height=1.8in]{improved10.eps}}
%\caption{UAV trajectory versus minimum rate ratio  $\theta_k=\theta$ for $T = 100$ s. } \label{homo1_trj}\vspace{-0.5cm}
%\end{figure}
%
%    \begin{figure}[!t]
%\centering
%\includegraphics[width=0.6\textwidth]{homo1_rate_ratio.eps}
%\caption{Max-min rate versus homogeneous minimum rate ratio $\theta=\theta_k$ for $T=100$ s.} \label{homo1_rate}\vspace{-0.4cm}
%\end{figure}

  \begin{figure}[!t]
\centering
\subfigure[$\theta =0$]{\includegraphics[width=2.5in, height=1.8in]{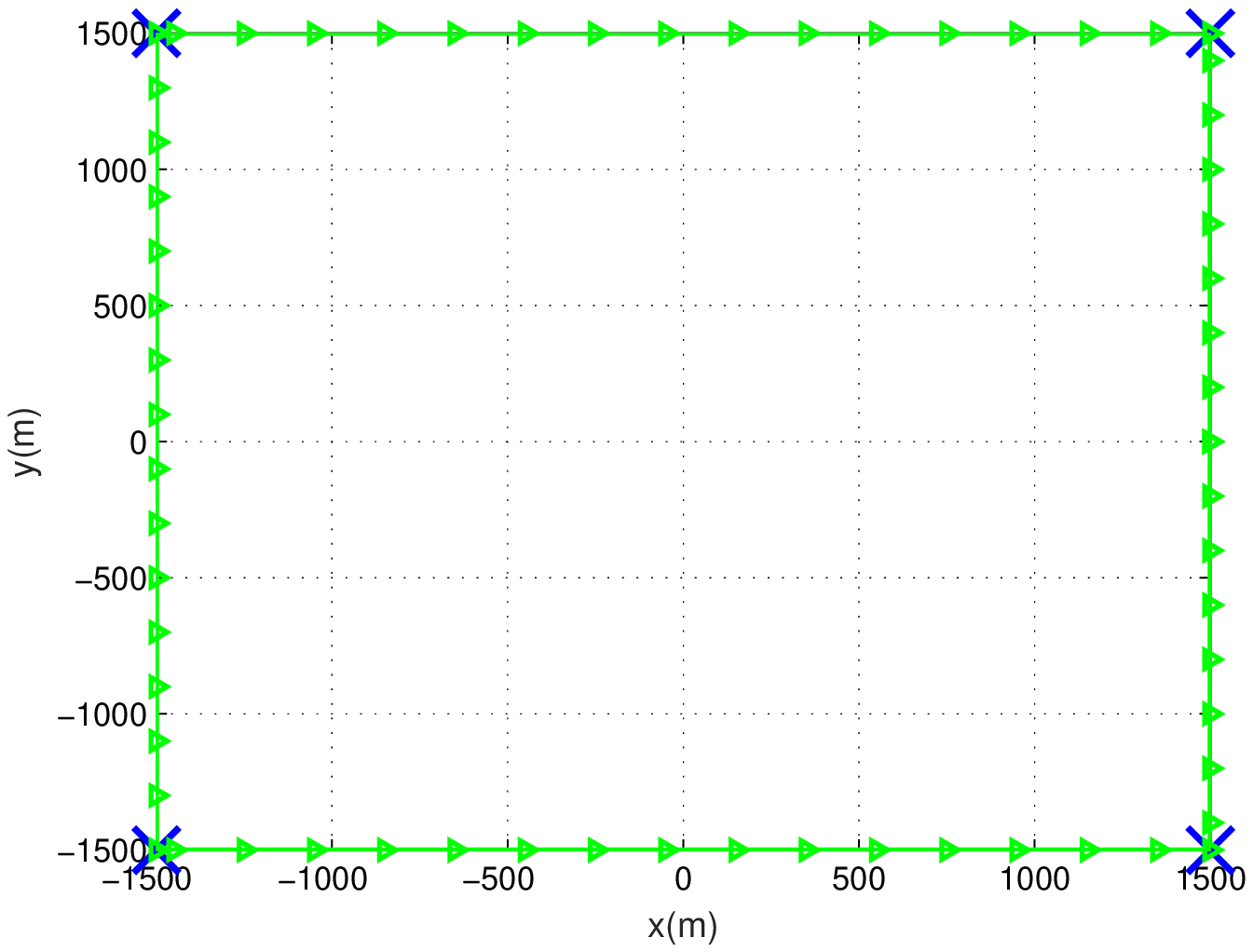}}
%\subfigure[$\theta =0.2$]{\includegraphics[width=2.1in, height=1.8in]{trj02_2000.eps}}
%\subfigure[$\theta =0.4$]{\includegraphics[width=2.1in, height=1.8in]{trj04_2000.eps}}
\subfigure[$\theta =0.6$]{\includegraphics[width=2.5in, height=1.8in]{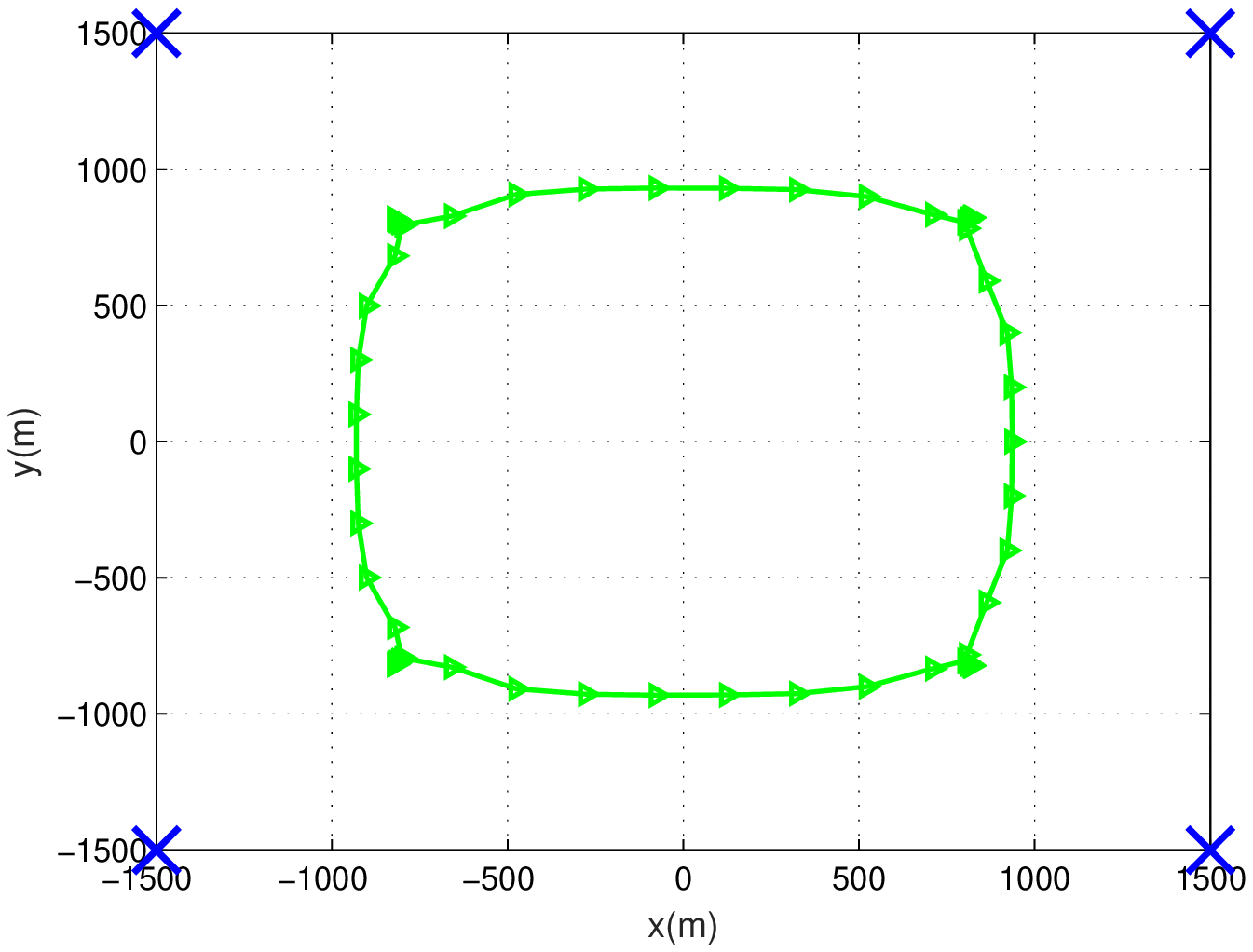}}
\subfigure[$\theta =0.8$]{\includegraphics[width=2.5in, height=1.8in]{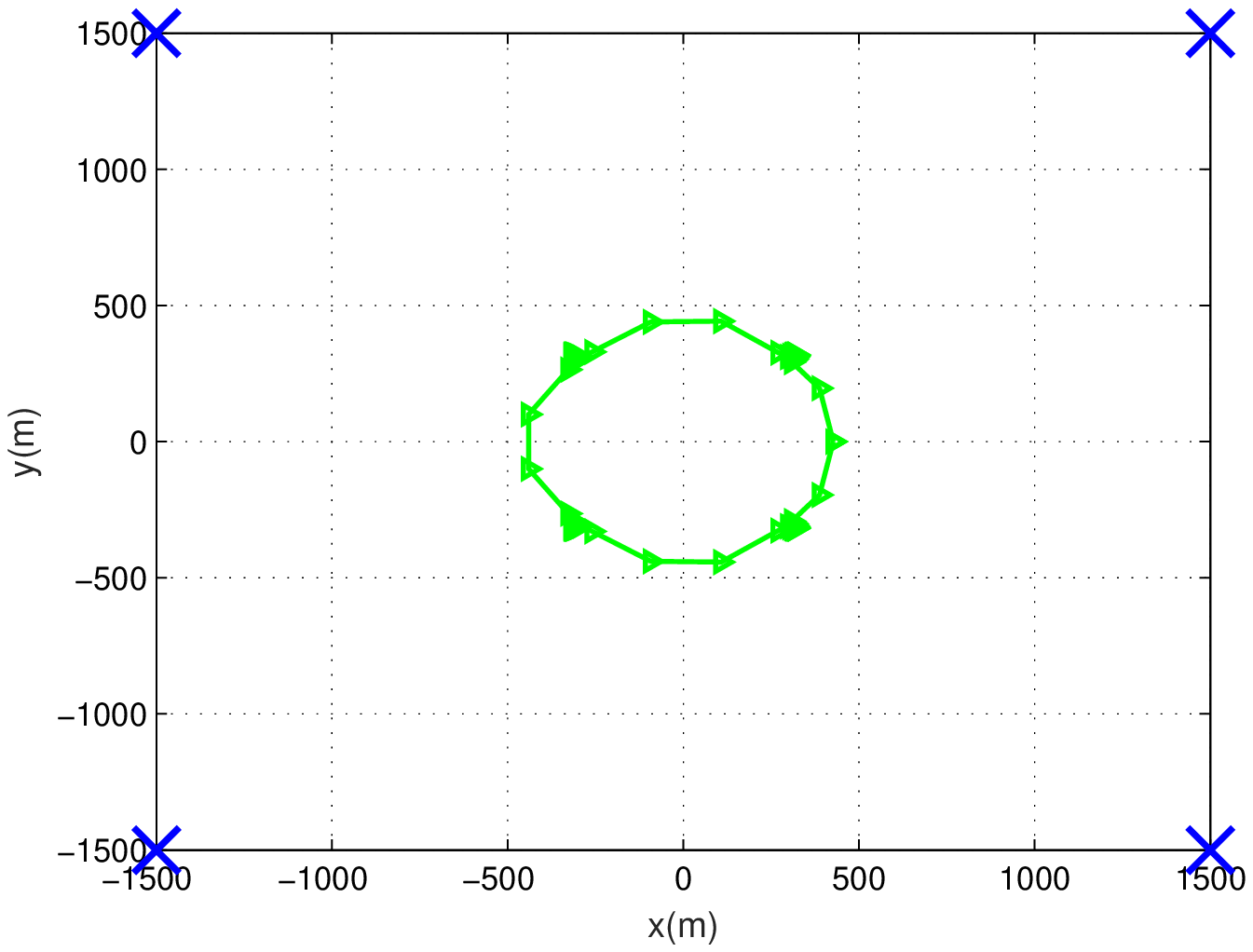}}
\subfigure[$\theta =1$]{\includegraphics[width=2.5in, height=1.8in]{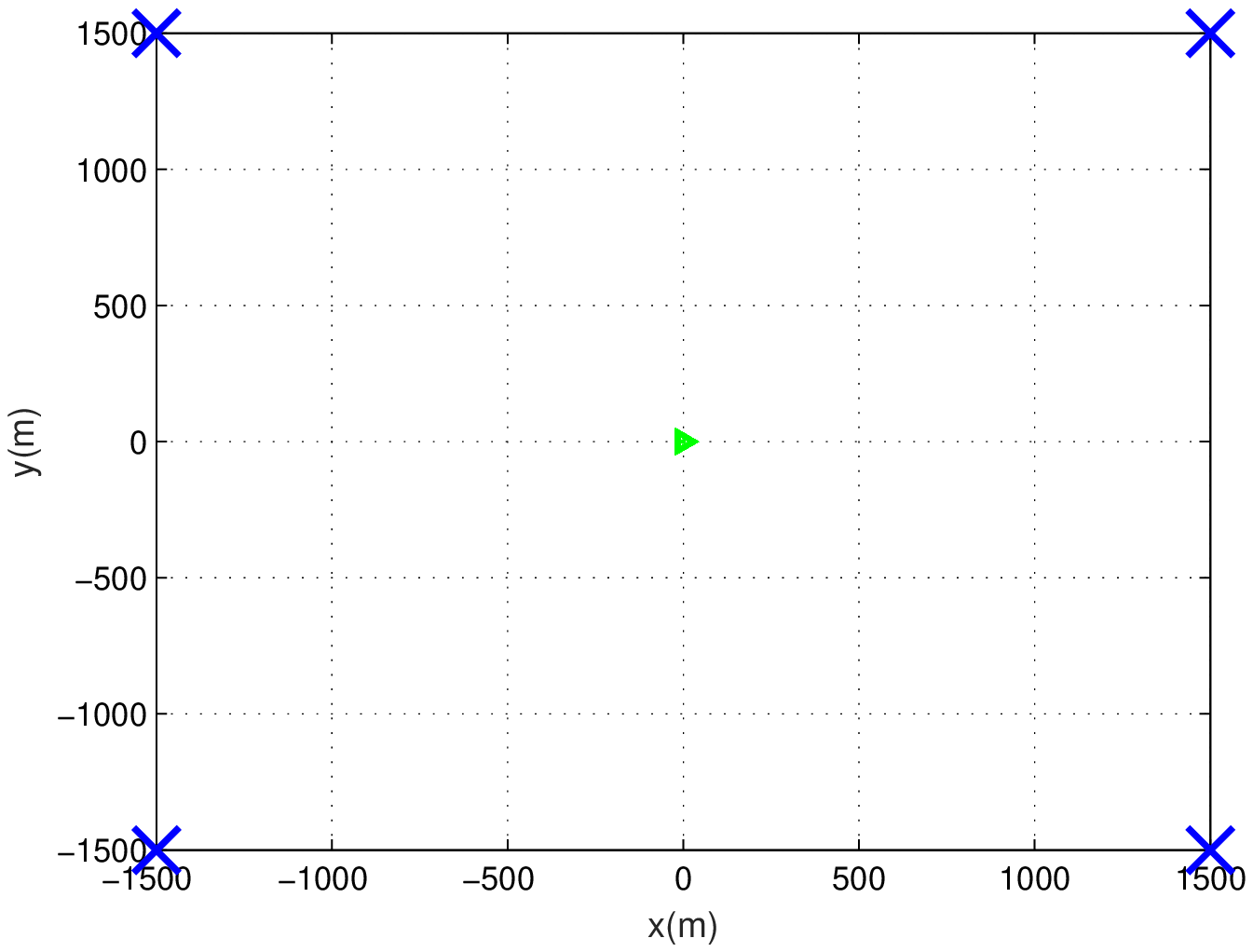}}
%\subfigure[]{\includegraphics[width=2.1in, height=1.8in]{improved10.eps}}
\caption{UAV trajectory versus homogeneous MRR  $\theta=\theta_k$, $\forall\, k$, for $T = 270$ s. } \label{homo1_trj}
\end{figure}

\subsection{UAV Trajectory and Max-min Throughput versus  Homogeneous  MRRs}
We first consider the homogeneous delay requirement case when all ground users have the same MRR, i.e., $\theta_k =\theta$, $\forall\, k$.  In Figs. \ref{homo1_trj} and \ref{homo1_rate}, the UAV trajectory and the max-min throughput are illustrated respectively under different MRRs. It can be observed from Fig.  \ref{homo1_trj} that as the MRR, $\theta$, increases, the UAV's flight distance decreases and the UAV's trajectory shrinks gradually from a square to a smaller ellipse and finally a fixed point,  given the same flight period $T$.   In particular, when $\theta=0$, i.e.,  no MRR constraint is considered as in \cite{wu2017joint},  the UAV sequentially visits and stays above each of the ground users by maximally exploiting its mobility. As such,  the best air-to-ground channel can be realized between the UAV and each ground user. However, when $\theta>0$, the UAV's trajectory adaptively changes depending on the value of $\theta$, as expected.
Notice that in this setup,  the closer the UAV flies to one particular user, the farther it is away from some other users inevitably.
%Thus, if the UAV follows the trajectory as no MRR is considered, then the users that are far away from the UAV when it stays above another user will experience severely degraded air-to-ground channels.
As a result, meeting the MRR constraints of these users  will consume more resources (power, bandwidth) and thus becomes the bottleneck for improving the max-min throughput of the system.  Such a situation becomes worse when the MRR and/or the inter-user distance becomes larger.
%the UAV has to consume more and more physical resource to meet the rate requirements of users when the fixed rate ratio becomes large.
%Therefore, it is no longer effective to depend only on the resource allocation and  a joint design of the resource allocation and UAV trajectory becomes necessary. \
Generally speaking, with more stringent MRR constraints, the UAV trajectory tends to be more restricted to avoid getting too far away from any of the users.
%creating good air-to-ground channels and making efficient use of the physical resource.\phi
%Furthermore, when $\theta=1$ which means that each user has the same instantaneous minimum-rate requirement across different time slots during the whole UAV flight period,  the UAV can only remain static at a fixed location.
Finally, as all ground users have the same MRR constraint as well as the same average throughput, the UAV trajectories are shown to be all symmetric over the users in Fig. \ref{homo1_trj}.
%the UAV's trajectory shrinks more quickly as the fixed rate ratio becomes larger.
%although the good air-to-ground channels can still be realized by
 %In particular,

    \begin{figure}[!t]
\centering
\includegraphics[width=0.6\textwidth]{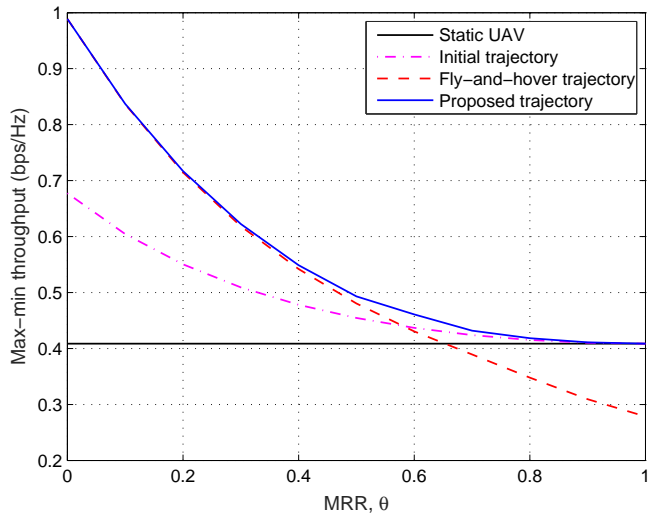}
\caption{Max-min throughput  versus homogeneous MRR $\theta=\theta_k$, $\forall\,k$, for $T=270$ s.} \label{homo1_rate}
\end{figure}

 The effect of the MRR constraint on the max-min average throughput is shown in Fig. \ref{homo1_rate}. Specifically,  we compare the following four trajectories: 1) Proposed trajectory which is obtained by applying Algorithm \ref{Algo:whole}; 2) Fly-and-hover trajectory where the UAV flies with the maximum speed to visit all the users and hovers (with zero speed) above each of them by equally allocating the remaining time\footnote{It is worth pointing out that this trajectory is only feasible when $T$ is sufficient large such that the UAV can visit above each of all ground users.};
 3) Initial circular trajectory as described in Section III-D; and 4) Static UAV where the UAV remains static above the geometry center of the users. For all the four schemes considered, the bandwidth and power allocation is optimized by applying Algorithm \ref{Algo:bdpow} with the corresponding UAV trajectory.
First, we observe  that the max-min average throughput gradually decreases with the MRR for the first three considered  trajectories with a mobile UAV in general. This is due to the fundamental tradeoff between the system throughput and the user delay/MRR constraint, which is in accordance with Theorem \ref{theorem1}. Second, when the MRR is small, the proposed trajectory significantly outperforms the circular trajectory and static UAV. This is because a small MRR in general implies a large degree of freedom for the UAV trajectory design,  but the static UAV cannot exploit the UAV mobility while the circular trajectory does not fully exploit the UAV mobility.
%Thus, neither of them is able to capture such a freedom.
 In contrast, the fly-and-hover trajectory is observed to achieve the throughput very close to the proposed trajectory for small MRR values.  However, when the MRR  becomes large, the fly-and-hover trajectory suffers from a significant throughput loss compared to the proposed trajectory. This is expected since the fly-and-hover trajectory results in highly asymmetric user channels over time and thus is inefficient for meeting increasingly more stringent users' MRR requirements.
 %only greedily flies towards users but does not take into account the instantaneous minimum-rate requirement, which results in an inefficient usage of the bandwidth and power resource.
 Finally, when the MRR reaches the maximum value of one, the max-min average throughput of the proposed trajectory becomes identical to that of static UAV as the trajectory converges to the same point as the static UAV.

\begin{figure}[]
\centering
\includegraphics[width=0.6\textwidth]{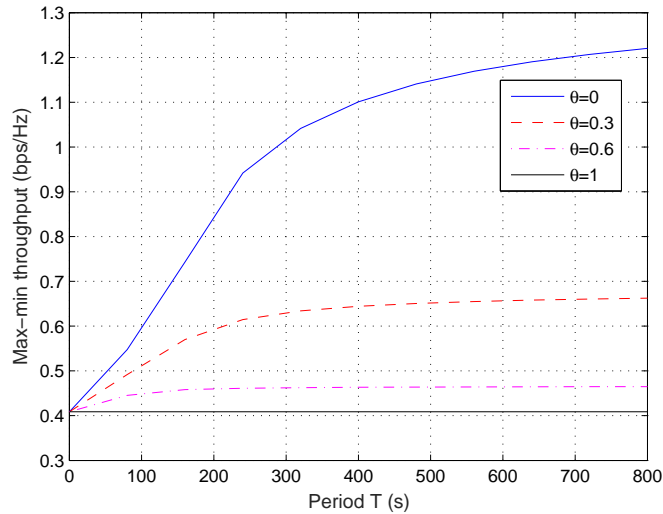}
\caption{Max-min throughput  versus UAV flight period $T$.} \label{homo1:T}
\end{figure}

In Fig. \ref{homo1:T}, the effect of the UAV flight period $T$ on the max-min average throughput is shown under different values of the homogeneous MRR, $\theta$.
It is observed that the max-min throughput in the three cases with $\theta<1$ all increases with $T$, while  for the case of  $\theta = 1$, the max-min throughput  remains constant, regardless of $T$. This suggests that as long as the users have delay-tolerant data traffic, i.e, $\theta<1$,  the UAV mobility indeed provides throughput gains over a static UAV with $\theta=1$ for any $T>0$. In addition, such a throughput gain generally increases with more flight time for the UAV, although the gain is more pronounced when $\theta$ is smaller or less strict delay/MRR constraints are applied.  This is because in such cases, as $T$ increases, the UAV has more time to get closer to each of the users and even be able to hover above them to enjoy the best channels to them. In contrast, if the services required by users are all delay-constrained, i.e., $\theta=1$, then the UAV is unable to achieve any throughput gain over the static case since the optimal UAV trajectory in this case is also a fixed point. The above observations further demonstrate the fundamental throughput-delay tradeoff. %: the UAV mobility improves the system throughput if and only if the user communication delay can be compromised \cite{lyu2016cyclical}.

\begin{figure}[!t]
\centering
\subfigure[$ \theta_3=\theta_4=0$]{\includegraphics[width=2.5in, height=1.8in]{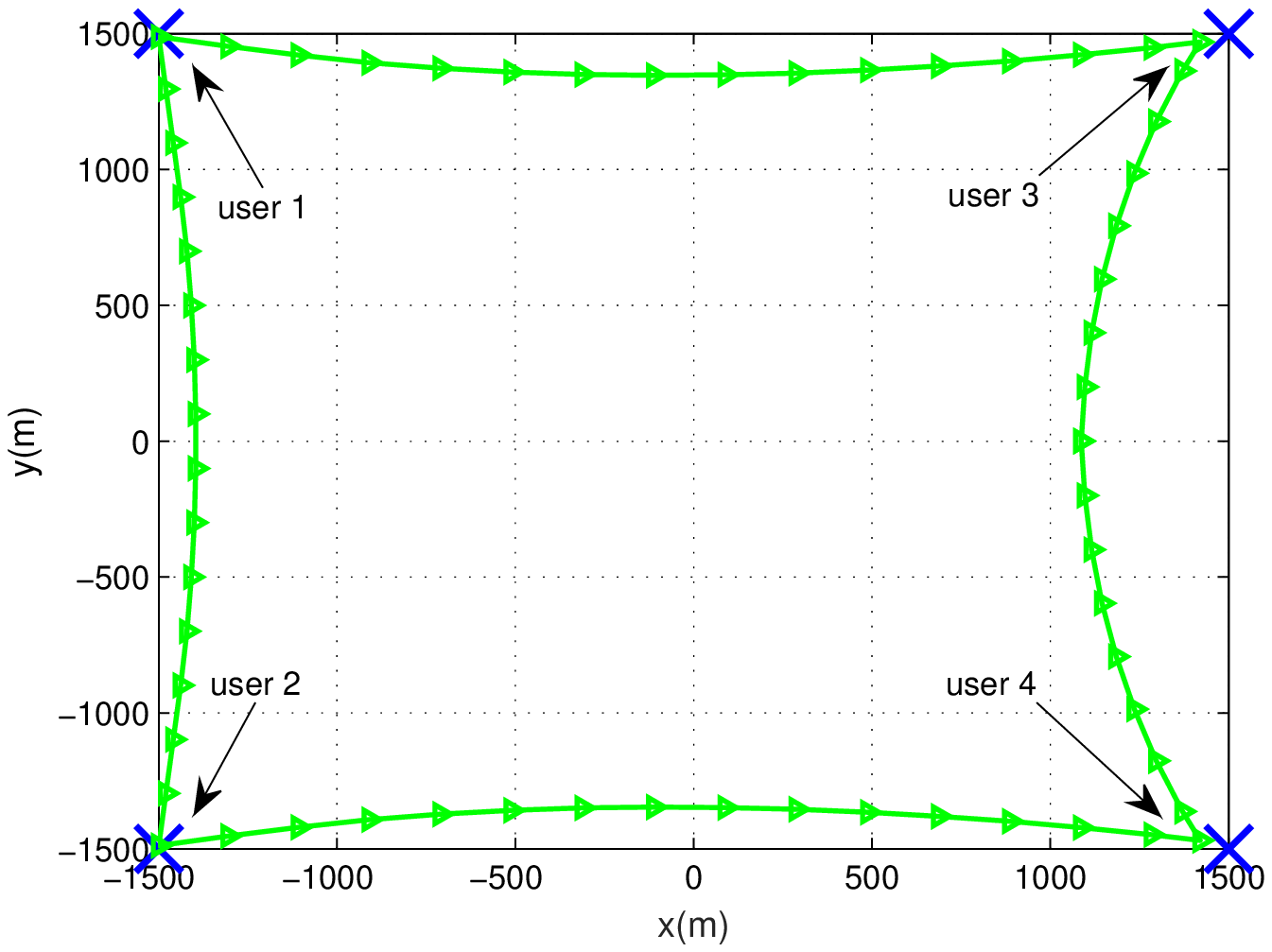}}
\subfigure[$ \theta_3=\theta_4=0.4$]{\includegraphics[width=2.5in, height=1.8in]{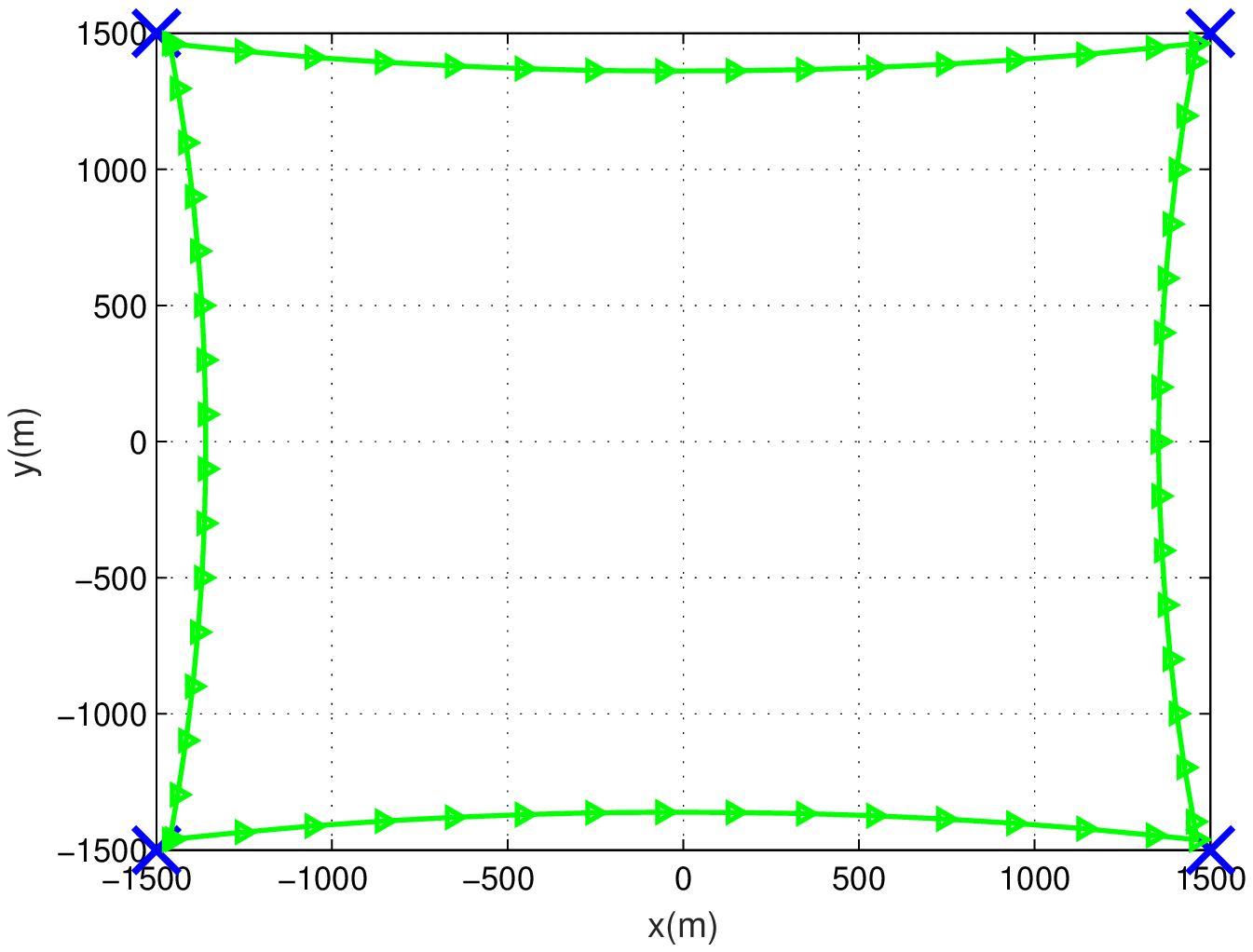}}
\subfigure[$\theta_3=\theta_4=0.6$]{\includegraphics[width=2.5in, height=1.8in]{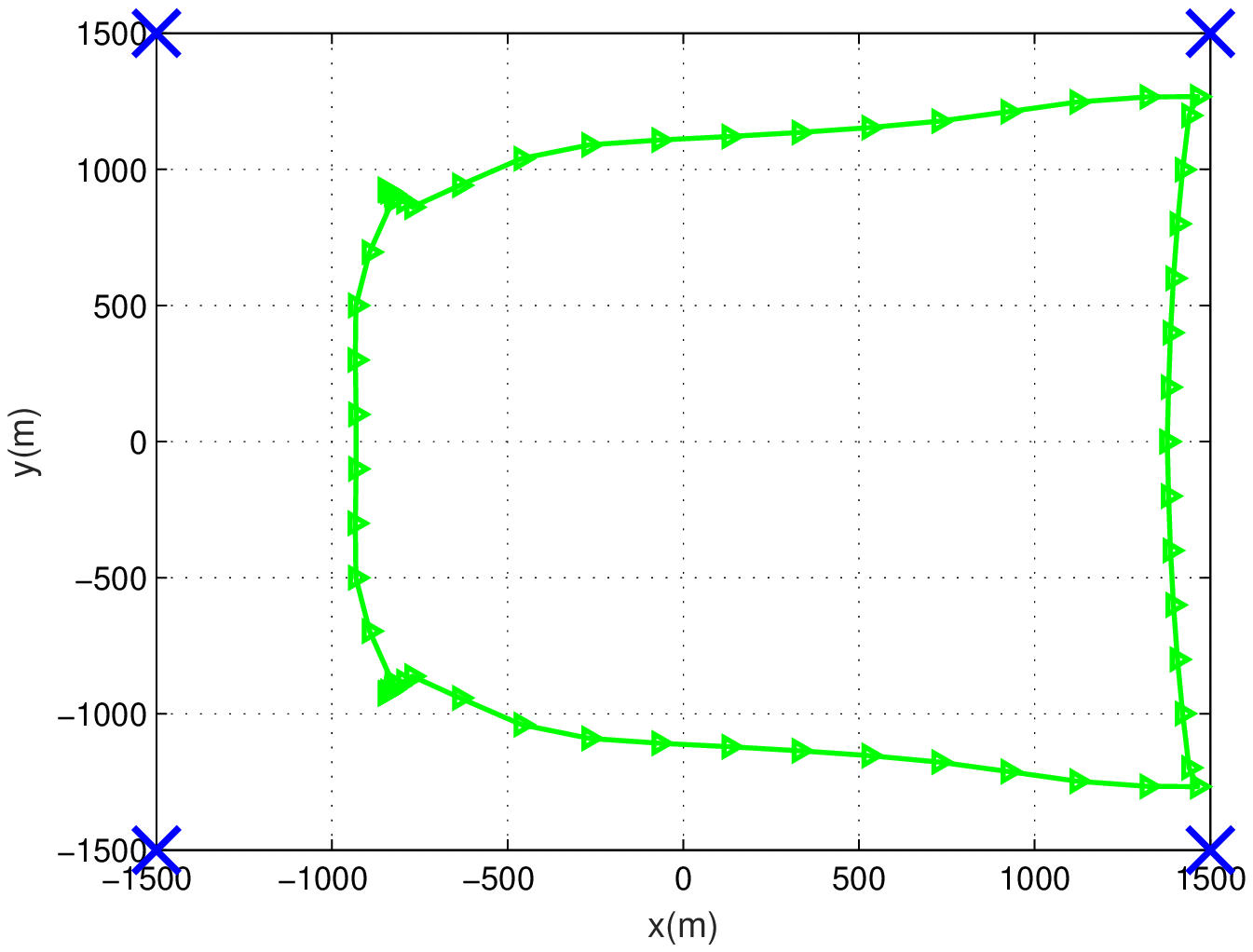}}
\subfigure[$\theta_3=\theta_4=0.8$]{\includegraphics[width=2.5in, height=1.8in]{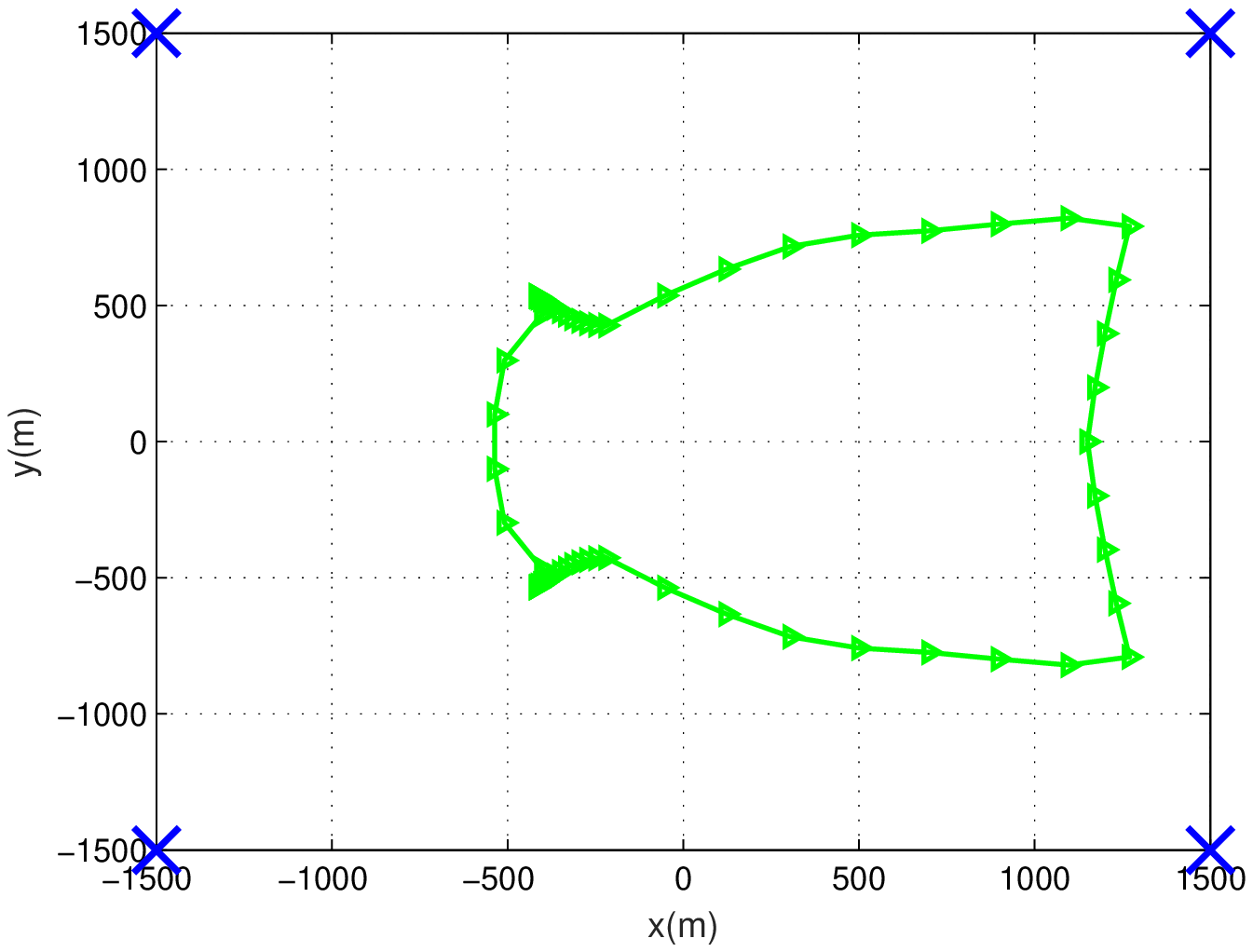}}
%\subfigure[]{\includegraphics[width=2.1in, height=1.8in]{improved10.eps}}
\caption{UAV trajectory versus MRRs of equal $\theta_3$ and $\theta_4$ for $T = 270$ s with fixed $\theta_1=\theta_2=0.4$.  } \label{hete1_trj}
\end{figure}

\begin{figure}[!t]
\centering
\includegraphics[width=0.6\textwidth]{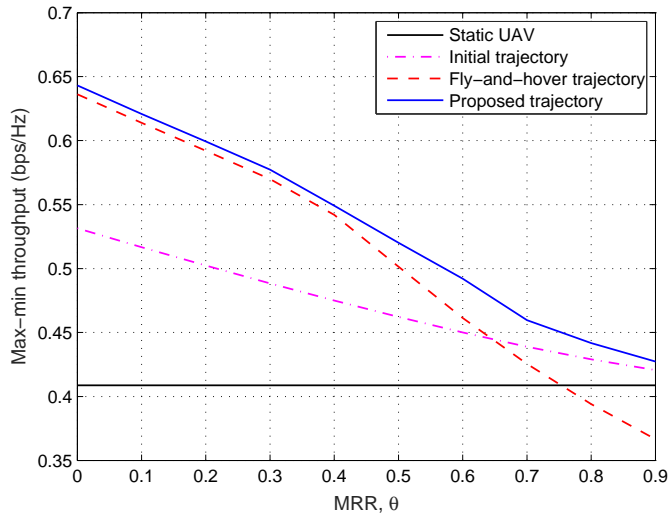}
\caption{Max-min throughput  versus heterogeneous MRRs $\theta=\theta_3=\theta_4$  for $T=270$ s with fixed $\theta_1=\theta_2=0.4$.} \label{hete1_rate}
\end{figure}

\subsection{UAV Trajectory and Max-min Throughput versus  Heterogeneous MRRs }
Next, we consider the practical case when ground users have different MRRs in general. In this example, we fix $\theta_1=\theta_2=0.4$ and vary $\theta_3$ and $\theta_4$ by assuming they are equal  to show their effects on the proposed UAV trajectory and achievable max-min throughput.  First, from Fig. \ref{hete1_trj}, we observe that as the MRRs of users 3 and 4 become larger, the UAV tends to adjust its trajectory to get closer to these two users so as to meet their increasingly more stringent minimum-rate requirements.
Meanwhile, since the MRRs of users 1 and 2 are fixed, their priority  is higher/lower than that of users 3 and 4 in  Fig. \ref{hete1_trj} (a) and Fig. \ref{hete1_trj} (c)/(d), respectively, in the optimized UAV trajectory design.  This thus results in asymmetric UAV trajectories for the four users in these cases, which is in contrast to the symmetric trajectory in  Fig. \ref{hete1_trj} (b) with the homogeneous users' MRR constraint. As such, users 1 and 2  in Fig. \ref{hete1_trj} (c) and (d) on average have worse air-to-ground channels than users 3 and 4 along the UAV trajectory. Since the same average throughput needs to be achieved for all users, the UAV  needs to lower its speed and even hover at some positions near users 1 and 2 to compensate their inferior channels caused by the asymmetric UAV trajectory.
Similar to Fig. \ref{homo1_rate}, we compare in Fig. \ref{hete1_rate} the max-min throughput  versus the MRRs of users 3 and 4 under the aforementioned four UAV trajectories. As $\theta_3$ and $\theta_4$ increase, the max-min average throughput is observed to decrease gradually, similarly as  in Fig. \ref{homo1_rate}. However, different from Fig. \ref{homo1_rate}, since only two users, namely, users 3 and 4, rather than all users,  increase their MRRs, the decreasing of the max-min throughput  versus $\theta$ in Fig. \ref{hete1_rate} is less significant as compared to that in Fig. \ref{homo1_rate}. %In addition, since the fly-and-hover trajectory does not taking into account the case of heterogeneous delay requirements, the throughput gap between the proposed trajectory and fly-and-hover trajectory is generally larger than that in  Fig. \ref{homo1_rate}.
 %Furthermore, when the minimum rate ratio approaches to 1, the average max-min rate decreases slowly with the minimum rate ratio.  This is because as the UAV's trajectory shrinks, the performance gain arising from the UAV mobility already becomes rather limited compared to a static UAV.

%\begin{figure}[!t]
%\centering
%\subfigure[]{\includegraphics[width=2.1in, height=1.8in]{four10.eps}}
%\subfigure[]{\includegraphics[width=2.1in, height=1.8in]{four8.eps}}
%\subfigure[]{\includegraphics[width=2.1in, height=1.8in]{four6.eps}}
%\subfigure[]{\includegraphics[width=2.1in, height=1.8in]{four4.eps}}
%\subfigure[]{\includegraphics[width=2.1in, height=1.8in]{four2.eps}}
%\subfigure[]{\includegraphics[width=2.1in, height=1.8in]{four0.eps}}
%%\subfigure[]{\includegraphics[width=2.1in, height=1.8in]{improved10.eps}}
%\caption{UAV's trajectory versus $\theta_k=\theta$ for $T = 100$ s. The initial trajectory is based on the circular trajectory as in Section II-D.  Four users are marked by $\times$s. } \label{speed_paper}\vspace{-0.5cm}
%\end{figure}
%
%    \begin{figure}[!t]
%\centering
%\includegraphics[width=0.6\textwidth]{compare4.eps}
%\caption{Performance versus $\theta=\theta_k$.} \label{UAV}\vspace{-0.4cm}
%\end{figure}

\section{Conclusions}
Motivated by employing UAVs to provide both delay-constrained and delay-tolerant services in future wireless networks, we consider in this paper a UAV-enabled OFDMA network for serving multiple ground users with heterogeneous communication delay requirements.  Specifically, by taking into account the users' MRR constraints, the system max-min average  throughput is maximized via jointly optimizing the UAV trajectory and OFDMA resource allocation.  We first show that the max-min throughput in general decreases with the MRR  of any user, which implies that the throughput gain arising from the UAV's mobility is less significant as the user delay requirements become more stringent.
  Then, we show that directly applying the conventional block coordinate descent method to solve the formulated problem may fail to update the UAV trajectory effectively. To overcome this issue, we propose a new parameter-assisted block coordinate descent algorithm which is shown by simulations to perform satisfactorily.
   Simulation results are provided to characterize the fundamental tradeoff between the system throughput  and the user communication delay under different MRR setups.
    %that it is exactly the user access delay that provides the chance for leveraging the UAV mobility to improve the communication performance.
    %Furthermore, when the user communication delay can be compromised such as in delay-tolerant applications, the joint design of the UAV trajectory and resource allocation is able to achieve significant performance improvement.
     We hope that the results of the paper can help assessing more practically the performance of UAV-enabled wireless systems with heterogeneous delay requirements.
{ Although a single UAV is considered in this work,  the multi-UAV scenario with potentially co-channel interference  \cite{JR:wu2017joint} is worth pursuing.  In addition, a cross-layer design by considering the UAV's finite data  buffer and the queuing delay along with UAV trajectory and physical-layer resource allocation is worthy of further investigation. Finally, as the UAV's energy consumption and hence endurance are crucial  in practice,  it is appealing to consider energy-efficient UAV trajectory design  \cite{zeng2016energy,yang2017energy} under heterogenous user delay requirements.}

\appendices

\bibliographystyle{IEEEtran}
\bibliography{IEEEabrv,mybib}

\end{document}